\documentclass[11pt]{article}

\usepackage[margin=1in]{geometry}
\usepackage{amsmath, amssymb, amsthm}
\usepackage{mathtools}
\usepackage{etoolbox}
\usepackage{hyperref}
\usepackage{parskip}
\usepackage[T1]{fontenc}

\theoremstyle{plain}
\newtheorem{theorem}{Theorem}[section]

\newtheorem{corollary}[theorem]{Corollary}
\theoremstyle{definition}
\newtheorem{definition}{Definition}[section]
\newtheorem{remark}{Remark}[section]
\AtEndEnvironment{remark}{\hfill$\triangle$}

\BeforeBeginEnvironment{theorem}{\vspace{\parskip}}
\BeforeBeginEnvironment{lemma}{\vspace{\parskip}}
\BeforeBeginEnvironment{corollary}{\vspace{\parskip}}
\BeforeBeginEnvironment{definition}{\vspace{\parskip}}
\BeforeBeginEnvironment{remark}{\vspace{\parskip}}

\newcommand{\SOTM}{\mathcal{M}}
\newcommand{\oracle}{\mathcal{O}}
\newcommand{\alphabet}{\Sigma}
\newcommand{\vocab}{\mathcal{T}}
\newcommand{\tokenizer}{\tau}
\newcommand{\tok}{\mathrm{tok}}
\newcommand{\TOK}{\mathrm{TOK}}
\newcommand{\Ber}{\mathrm{Ber}}
\newcommand{\KL}[2]{D\!\left(#1 \,\middle\|\, #2\right)}
\newcommand{\E}{\mathbb{E}}

\newcommand{\Rel}{\operatorname{Rel}}

\title{\textbf{Token Complexity of Certifying Stochastic-Oracle Reliability}}
\author{Jie Wang\footnote{Richard Miner School of Computing and Information Sciences, University of Massachusetts, Lowell, MA 01854, USA.} \\
jie\rule{0.4em}{0.4pt}wang@uml.edu}
\date{}

\begin{document}

\maketitle

\begin{abstract}
Wang~\cite{Wang2026} introduced the Stochastic-Oracle Turing Machine
(SOTM) framework and defined token complexity as the minimum expected cost
of interacting with a stochastic oracle needed to attain a specified
solution quality for a task. This paper develops an analogous notion for
certifying the reliability of a stochastic oracle on a given domain.
Certification token complexity is the minimum expected token cost required,
with controlled error probability, to distinguish oracles that meet a
target reliability level from those that fall below a lower reliability
threshold.

We construct an SPRT-based certification SOTM that queries the oracle,
computes binary correctness scores, and stops when the accumulated
log-likelihood evidence crosses a decision threshold. The SOTM halts almost
surely, satisfies the desired two-sided error guarantee over the
reliability regions to be certified, and yields an explicit upper bound on
certification token complexity in terms of the reliability thresholds, the
error bound, and the expected per-turn token cost. We then establish a
matching information-theoretic lower bound: even with adaptive queries,
every error-bounded certification SOTM must incur the same leading-order
expected token cost as the SPRT-based construction as the prescribed error
bound tends to zero. Together, these bounds characterize the leading-order
certification token complexity in the small-error regime.
\end{abstract}

\section{Introduction}\label{sec:intro}

The emergence of AI-augmented computing has created a new computational
paradigm in which a program delegates knowledge-intensive or
skill-intensive subtasks to an AI system. In such settings, the program
relies on knowledge and problem-solving capabilities supplied by the AI
system and not reproducible by the program alone. The AI system may be a
single large language model, an in-house trained model, or a collection
of stochastic systems together with a control mechanism that directs
queries. Such an AI system can naturally be modeled as a stochastic
oracle whose responses are drawn from a fixed distribution for each
query~\cite{Wang2026}.

Interactions with stochastic oracles differ from calls to classical
oracles in three fundamental ways. First, for each query, a stochastic
oracle's response is drawn from a fixed, query-dependent distribution,
whereas a classical oracle always returns the same response to the same
query. Second, the correctness of a stochastic oracle's response is
probabilistic, whereas a classical oracle's response has deterministic
correctness. Third, each interaction carries an intrinsic token cost: the
tokens needed to pose a query and those needed to read the response. In
contrast, calls to a classical oracle are assumed not to incur such costs.

These costs are not merely implementation details. Query tokens encode
requests formulated by the machine, whereas response tokens encode
oracle-generated information. 
A theory of computation with stochastic
oracles must therefore account not only for whether a task can be solved,
but also for the token cost of obtaining enough useful information from the oracle.
Wang~\cite{Wang2026} formalized this perspective in the
Stochastic-Oracle Turing Machine (SOTM) framework, defining token
complexity as the minimum interaction cost required to attain a specified
solution quality for a task.\footnote{The SOTM was called the
AI-Oracle Turing Machine (AOTM) in that work; we rename it here since
AI is a historically shifting term, while stochastic precisely names the
operative mathematical property of the oracle.} 

Throughout the paper, we use \emph{oracle} as shorthand for a stochastic
oracle in the SOTM sense, unless classical oracles are explicitly being
discussed.

Wang's work treats the oracle as a fixed resource available to the SOTM
designer and studies structural relationships between cost and solution
quality~\cite{Wang2026}. This paper addresses the complementary question
of how many tokens are needed to certify, with controlled error
probability, the reliability of an existing stochastic oracle on a
specified application domain.

This question is especially important for the domain-specific deployment
of stochastic oracles. General-purpose benchmarks may not adequately
measure an oracle's performance in specialized applications, such as legal
analysis, business analysis, medical triage, scientific reasoning,
customer support, and internal enterprise workflows. Before deploying an
oracle in such a domain, one may wish to certify that its reliability meets
a target level while controlling the probability of an incorrect
certification decision.

This paper formulates this task as \emph{oracle reliability certification}.
A certification SOTM interacts with the oracle to be certified, evaluates
its responses using a fixed binary scoring rule, and determines whether the
oracle lies in the certifiably reliable or certifiably unreliable region.
No definitive decision is required in the ambiguity region between the two
thresholds.

We define certification token complexity as the minimum expected token
cost among all certification SOTMs that achieve the required
error-bounded distinction between these two certifiable regions. This
definition parallels Wang's task-level token complexity, but the goal is
different. Rather than asking how many tokens are required to solve a
task to a desired quality level, we ask how many tokens are required to
evaluate an existing oracle with controlled statistical error. In this
sense, oracle reliability certification is a measurement problem: the
oracle is treated as a black-box stochastic system, and the certification
SOTM must spend tokens to obtain enough evidence about its reliability.

To upper-bound this certification token complexity, we construct an
explicit certification SOTM based on the Sequential Probability Ratio Test (SPRT)~\cite{Wald1947,WaldWolfowitz1948,LehmannRomano2005}. 
The certification SOTM repeatedly samples evaluation queries, obtains oracle
responses, computes binary correctness scores, and updates a
log-likelihood statistic. The SOTM stops when the accumulated
evidence crosses one of two decision thresholds. We prove that this SPRT-based certification SOTM halts almost surely,
satisfies the desired two-sided error guarantee whenever the oracle's
reliability lies above the target threshold or below the lower threshold,
and gives an explicit upper bound on certification token complexity. This bound can be
estimated in advance from the reliability thresholds, the allowable error
probability, and the expected per-turn token cost.

We then establish a matching information-theoretic lower bound: even with
adaptive queries, every certification SOTM with controlled error must incur
the same leading-order expected token cost as the SPRT-based certification SOTM as
the prescribed error bound tends to zero. Together, the upper and lower
bounds characterize the leading-order certification token complexity in the
small-error regime.

The resulting characterization shows that certification becomes only
logarithmically more costly as the prescribed error bound decreases, but
quadratically more costly as the separation between the lower and target
reliability thresholds narrows. Thus, demanding greater confidence is
relatively cheap compared with demanding a sharper distinction between
nearby reliability levels. This provides a principled way to estimate the
token complexity of certifying an LLM or other AI systems before
deployment in a specialized application domain.

The remainder of the paper is organized as follows.
Section~\ref{sec:preliminaries} recalls the necessary definitions from
the SOTM framework, including stochastic oracles, interaction cost, and
task-level token complexity. Section~\ref{sec:certification setup} introduces oracle
reliability certification and defines certification token complexity.
Section~\ref{sec:upper}
constructs the SPRT-based certification SOTM and proves an upper bound on
certification token complexity. Section~\ref{sec:lower} proves the
matching information-theoretic lower bound and explains the resulting
token-complexity characterization. Section~\ref{sec:conclusion}
concludes the paper.

\section{Preliminaries}\label{sec:preliminaries}

We use the SOTM model and token-complexity framework of
Wang~\cite{Wang2026}, recalling only the definitions needed here. Fix an
encoding over a finite alphabet~$\alphabet$, and regard all objects as
strings in~$\alphabet^*$.

A \emph{Stochastic-Oracle Turing Machine (SOTM)} is a pair
$\SOTM = (M, \oracle)$, where~$M$ is a probabilistic Turing machine
and~$\oracle$ is a stochastic oracle, a family
$\{\mathcal{D}_q : q \in \alphabet^*\}$ of response distributions,
one for each query string. The machine~$M$ has an input tape, an
output tape, one or more work tapes, a read-only random tape, a query tape, and a
response tape. On each oracle call,~$M$ writes a query string
$q \in \alphabet^*$ to the query tape; $\oracle$ draws a response
$r \in \alphabet^*$ from~$\mathcal{D}_q$ and writes it to the
response tape.

A \emph{turn}\footnote{We use the term \emph{turn}, which is standard in
LLM interactions, for what is often called a \emph{round} in sequential
analysis.} is one complete interaction cycle between $M$ and $\oracle$:
the local computation by which $M$ produces a query, the oracle call that
returns a sampled response, and the subsequent local computation by which
$M$ reads and processes that response.

The oracle provides response distributions that are external to the
machine~$M$. Thus, although $M$ can obtain a response
$r\sim\mathcal{D}_q$ by querying $\oracle$ on $q$, the model does not
assume that $M$ can reproduce these distributions internally. This
asymmetry makes oracle access nontrivial and gives rise to the reliability
questions studied below.

A \emph{task} is a tuple $T = (X, Y, S, \mathcal{D}_X)$
with input space $X \subseteq \alphabet^*$, output space
$Y \subseteq \alphabet^*$, score function
$S\colon X \times Y \to [0,1]$ (polynomial-time
computable in $|x|+|y|$), and input distribution~$\mathcal{D}_X$.

A tokenizer $\tokenizer\colon \alphabet^* \to \vocab^*$~\cite{Sennrich2016,Schuster2012,Kudo2018}
maps queries and responses to token strings. The token count at
turn~$i$, with query~$q_i$ and response~$r_i$, is
$|\tau(q_i)| + |\tau(r_i)|$. The total token count of~$\SOTM(x)$
over~$n$ turns is
\[
  \tok_\SOTM(x)
  \;=\; \sum_{i=1}^n |\tau(q_i)| + \sum_{i=1}^n |\tau(r_i)|,
\]
where we write $\tok_{\SOTM,Q}(x) = \sum_{i=1}^n |\tau(q_i)|$ and
$\tok_{\SOTM,R}(x) = \sum_{i=1}^n |\tau(r_i)|$ for the total query
and response token counts. 

We allow query and response tokens to have different costs, writing
$\alpha$ for the cost of a query token and~$\beta$ for the cost of a
response token. This distinction reflects the practical asymmetry that
processing a query token and generating a response token can incur
different intrinsic costs; in typical LLM deployments, response tokens
are more expensive, so one often has~$\beta \ge \alpha$.

Fix token cost parameters $\alpha,\beta>0$. For an SOTM~$\SOTM$, let
$\tok_{\SOTM,Q}$ denote the total number of query tokens issued by the
machine, and let $\tok_{\SOTM,R}$ denote the total number of response
tokens returned by the oracle. The \emph{total token cost} of~$\SOTM$ is
\[
  \TOK_{\SOTM}^{\alpha,\beta}
  =
  \alpha\,\tok_{\SOTM,Q}
  +
  \beta\,\tok_{\SOTM,R}.
\]
When the cost parameters are clear from context, we write
$\TOK_{\SOTM}$ for $\TOK_{\SOTM}^{\alpha,\beta}$.

For an input-dependent task, we write
$\TOK_{\SOTM}^{\alpha,\beta}(x)$ for the total token cost incurred by
$\SOTM$ on input~$x$. Its expected token cost on a task distribution
$\mathcal{D}_X$ is
\[
  \E_{x\sim\mathcal{D}_X}
  \bigl[\TOK_{\SOTM}^{\alpha,\beta}(x)\bigr].
\]
For a certification SOTM, which takes no external input, the expected
token cost is
\[
  \E\bigl[\TOK_{\SOTM}^{\alpha,\beta}\bigr],
\]
where the expectation is over the query choices, oracle responses,
computed scores, stopping time, final verdict, and any internal
randomness of the machine.

The \emph{task-level token complexity} of~$T$ at quality threshold
$\theta\in[0,1]$ and token cost parameters $\alpha,\beta>0$ is
\[
  \kappa_T(\theta;\alpha,\beta)
  =
  \inf_{\SOTM}
  \E_{x\sim\mathcal{D}_X}
  \bigl[\TOK_\SOTM^{\alpha,\beta}(x)\bigr],
\qquad \text{subject to:}
\qquad
  \E_{x\sim\mathcal{D}_X}
  \bigl[S(x,\SOTM(x))\bigr]
  \ge
  \theta.
\]

\paragraph{Convention.}
Throughout the paper, whenever they are used, $\alpha$ and $\beta$ denote
the costs per query token and per response token, respectively.


\section{Reliability Certification Setup}\label{sec:certification setup}

To evaluate an oracle $\oracle = \{\mathcal D_q \mid q \in \Sigma^*\}$ on a task, the explicit output space~$Y$ of the task is not
needed, as that output space belongs to the full task-solving problem, in
which an SOTM produces a final answer using its internal randomness and
probabilistic algorithms. In the certification setting, by contrast, we
evaluate the oracle's response behavior directly by applying a scoring
function to each query--response pair.

We therefore define an \emph{evaluation domain}
$\boldsymbol{d} = (X, S, \mathcal{D}_X)$, consisting of a query
domain~$X$, a distribution~$\mathcal{D}_X$ over queries, and a binary
scoring function
\[
S : X \times \alphabet^* \to \{0,1\}.
\]
The scoring function may be hand-specified or implemented by a fixed
evaluator model; in either case, it is treated as fixed throughout
certification.

We assume that $X$ is finite and serves as the sample space, and that
oracle-response lengths are controlled relative to the lengths of their
corresponding queries; namely,
there are constants $\lambda,c<\infty$ such that every response~$r$ to a
query $x\in X$ satisfies
\[
  |\tau(r)| \le \lambda\,|\tau(x)|+c.
\]
This captures the practical setting of certification on a fixed
evaluation set with a bounded response protocol, such as an LLM
evaluation with a prescribed maximum output length. Since $X$ is finite, the query length $|\tau(x)|$ is uniformly bounded
over $x\in X$. The length-control assumption then uniformly bounds
$|\tau(r)|$ as well. Consequently, the per-turn token cost is uniformly
bounded.

The underlying implementation of~$\oracle$, whether a single model, an
ensemble, or another mechanism, is abstracted away. The certification
SOTM has access only to the oracle's query--response behavior, represented
by the response distributions associated with queries $x\in X$.

The \emph{reliability} of an oracle~$\oracle$ on an evaluation domain
$\boldsymbol{d}$ is
\[
  \Rel_{\boldsymbol{d}}(\oracle)
  =
  \E_{x\sim\mathcal{D}_X}\!\left[
    \E_{r\sim\mathcal{D}_x}\!\bigl[S(x,r)\bigr]
  \right].
\]
When the oracle and evaluation domain are clear from context, we write
\[
  p=\Rel_{\boldsymbol{d}}(\oracle).
\]
We sometimes abbreviate
$
\E_{x\sim\mathcal{D}_X}\!\left[
  \E_{r\sim\mathcal{D}_x}\!\bigl[S(x,r)\bigr]
\right]
$
as
$
\E_{x\sim\mathcal{D}_X,\, r\sim\mathcal{D}_x}\!\bigl[S(x,r)\bigr]
$.

\begin{definition}[Certification Thresholds]
Fix $0 < p_0 < p_1 < 1$. On a given evaluation domain, an
oracle~$\oracle$ is said to \emph{meet the target level} if its
reliability satisfies $p \ge p_1$, and to \emph{fall below the baseline
level} if $p \le p_0$. The interval $(p_0,p_1)$ is the
\emph{ambiguity gap}: when $p \in (p_0,p_1)$, the oracle is not required
to be certified in either direction. The thresholds $p_0$ and $p_1$
therefore partition the certifiable cases, while the gap is an
indifference region in which no finite-sample rule is required to
decide.
\end{definition}

\paragraph{Convention.}
Unless otherwise stated, all definitions and results throughout the paper
are formulated for a fixed oracle $\oracle$, a fixed evaluation domain
$\boldsymbol{d}=(X,S,\mathcal{D}_X)$, and fixed reliability thresholds
$0<p_0<p_1<1$.

\begin{definition}[Certification SOTM]
A \emph{certification SOTM} is an SOTM $\SOTM=(M,\oracle)$ in which the
machine $M$ receives no external input. The scoring function $S$ and
the distribution $\mathcal{D}_X$ are known to $M$, whereas the
reliability
\[
p = \Rel_{\boldsymbol{d}}(\oracle)
\]
is unknown to $M$.

The machine proceeds sequentially. At turn $i$, it selects a query
$x_i\in X$, possibly as a function of the preceding transcript, queries
$\oracle$ with $x_i$, receives a response $r_i$, and computes the
binary score
\[
Z_i = S(x_i,r_i) \in \{0,1\}.
\]
The transcript after $i$ turns is
\[
\mathcal{T}_i = (x_1,r_1,\ldots,x_i,r_i).
\]

The machine is specified by a stopping time $N$ and a verdict function.
After observing $\mathcal{T}_i$, the machine decides whether to halt;
$N$ is the first turn at which it halts. If $N<\infty$, the verdict
function maps the stopped transcript to
\[
v(\mathcal{T}_N)\in
\{\mathsf{Reliable},\mathsf{Unreliable}\}.
\]
The verdict $\mathsf{Reliable}$ certifies that $p\ge p_1$, whereas
$\mathsf{Unreliable}$ certifies that $p\le p_0$. A run with
$N=\infty$ produces no verdict. No correctness requirement is imposed
when $p\in(p_0,p_1)$.
\end{definition}

\begin{definition}[Adaptive and non-adaptive certification SOTMs]
A certification SOTM is \emph{adaptive} if its choice of query~$x_i$ may
depend on the previous transcript~$\mathcal{T}_{i-1}$. It is
\emph{non-adaptive} if the queries are chosen independently of the
previous oracle responses and computed scores. 
\end{definition}

\begin{definition}[Error-bounded certification SOTM]
\label{def:error-bounded}
Fix an error parameter
$\varepsilon\in(0,1/2)$. A certification SOTM is
\emph{$\varepsilon$-error-bounded} if it halts almost surely and its
final verdict $v\in\{\mathsf{Reliable},\mathsf{Unreliable}\}$ satisfies
\[
  \Pr_p(v=\mathsf{Unreliable}) \le \varepsilon
  \qquad
  \text{for every } p\ge p_1,
\]
and
\[
  \Pr_p(v=\mathsf{Reliable}) \le \varepsilon
  \qquad
  \text{for every } p\le p_0.
\]
Here $\Pr_p$ denotes probability over the query choices, oracle
responses, computed scores, stopping time, final verdict, and the
machine's internal randomness when the oracle reliability on the
evaluation domain is~$p$.
\end{definition}

\paragraph{Convention.}
Throughout the paper, $\varepsilon\in(0,1/2)$ denotes the error bound of
the underlying SOTM whenever it is used.

For a certification SOTM, the \emph{sample size} is the number of oracle
queries issued before the machine halts. Equivalently, it is the number
of turns, since each turn includes one query--response interaction and
the local computation used to process it.
The
\emph{sample complexity} of the certification problem is the infimum of
$\E[N]$ over all certification SOTMs satisfying the prescribed error
guarantees.

For a certification SOTM with stopping time $N$, its expected sample size
is denoted by $\E[N]$.

\begin{definition}[Certification token complexity]
\label{def:certification-token-complexity}
The \emph{certification token complexity} is
\[
  \kappa_{\boldsymbol{d}}^{\mathrm{cert}}
  (p_0,p_1,\varepsilon;\alpha,\beta)
  =
  \inf_{\SOTM}
  \E[\TOK_\SOTM^{\alpha,\beta}],
\]
where the infimum is taken over all certification SOTMs that halt almost
surely and whose verdict $v\in\{\mathsf{Reliable},\mathsf{Unreliable}\}$
satisfies
\[
  \Pr_p(v=\mathsf{Unreliable})\le \varepsilon
  \qquad
  \text{for every } p\ge p_1,
\]
and
\[
  \Pr_p(v=\mathsf{Reliable})\le \varepsilon
  \qquad
  \text{for every } p\le p_0.
\]
\end{definition}

\section{Constructing a Certification SOTM}\label{sec:upper}

Let $p$ denote the reliability of $\oracle$ on
$\boldsymbol{d}$. 
We construct a non-adaptive certification SOTM
\[
  \SOTM_{\mathrm{SPRT}}(p_0,p_1,\varepsilon)=(M,\oracle)
\]
based on the SPRT.
The SOTM tests the hypotheses
\[
  H_0:p\le p_0
  \quad\text{versus}\quad
  H_1:p\ge p_1
\]
by updating a log-likelihood ratio after each observation and halting
once the accumulated evidence crosses a threshold.

The computation proceeds in turns. At each turn $i$, $M$
draws a query $x_i\sim\mathcal{D}_X$, independently of all previous
turns, and issues $x_i$ to $\oracle$. It receives a response
$r_i\sim\mathcal{D}_{x_i}$, independently of all previous queries and
responses, and computes
\[
  Z_i=S(x_i,r_i).
\]
Thus, $Z_1,Z_2,\ldots$ are i.i.d.\ Bernoulli random variables with
parameter $p$. That is,
\[
  Z_i\sim\Ber(p),
  \qquad
  \Pr[Z_i=1]=p,
  \qquad
  \Pr[Z_i=0]=1-p.
\]
Equivalently, for $z\in\{0,1\}$,
\[
  \Ber(p)(z)=p^z(1-p)^{1-z}.
\]

The SPRT compares the likelihoods corresponding to the boundary values
$p=p_0$ and $p=p_1$. For a realized score sequence
$z_1,\ldots,z_n$, the probabilities of observing that sequence under
these two values are, respectively,
\[
  \prod_{i=1}^n \Ber(p_0)(z_i)
  \qquad\text{and}\qquad
  \prod_{i=1}^n \Ber(p_1)(z_i).
\]
Therefore, the log-likelihood ratio in favor of $p=p_1$ over $p=p_0$
after $n$ observations is
\[
  L_n
  =
  \sum_{i=1}^{n}
  \ln\frac{\Ber(p_1)(Z_i)}{\Ber(p_0)(Z_i)}
  =
  \sum_{i=1}^{n}
  \left(
    Z_i\ln\frac{p_1}{p_0}
    +(1-Z_i)\ln\frac{1-p_1}{1-p_0}
  \right).
\]

The KL divergence between the Bernoulli distributions $\Ber(p_1)$ and
$\Ber(p_0)$ is
\[
  D\!\left(\Ber(p_1)\,\|\,\Ber(p_0)\right)
  =
  p_1\ln\frac{p_1}{p_0}
  +(1-p_1)\ln\frac{1-p_1}{1-p_0}.
\]
For brevity, write
\[
  \KL{p_1}{p_0}
  :=
  D\!\left(\Ber(p_1)\,\|\,\Ber(p_0)\right).
\]
This is the expected one-step increment of $L_n$ when $p=p_1$.
Similarly, when $p=p_0$, the expected one-step decrement is
$-\KL{p_0}{p_1}$.

This is the expected one-step increment of $L_n$ when $p=p_1$.
Similarly, when $p=p_0$, the expected one-step increment is
$-\KL{p_0}{p_1}$.

The machine computes $L_n$ recursively. Starting from $L_0=0$, it
updates, at turn $i$,
\begin{align}
  L_i &= L_{i-1}+\ell_i, \label{eq:llr-update}\\
  \ell_i
  &=
  \ln\frac{\Ber(p_1)(Z_i)}{\Ber(p_0)(Z_i)} \label{eq:ell}\\
  &=
  Z_i\ln\frac{p_1}{p_0}
  +(1-Z_i)\ln\frac{1-p_1}{1-p_0}.
  \nonumber
\end{align}
Thus, $M$ adds $\ln(p_1/p_0)>0$ after a correct response
($Z_i=1$) and adds $\ln((1-p_1)/(1-p_0))<0$ after an incorrect response
($Z_i=0$). The statistic $L_i$ accumulates evidence for whether
$p\ge p_1$ or $p\le p_0$, without requiring an explicit estimate of $p$.

Define the log-likelihood thresholds
\[
  A=\ln\frac{1-\varepsilon/2}{\varepsilon/2},
  \qquad
  B=-A,
\]
and the stopping time
\[
  N=\inf\{n\ge 1:L_n\ge A \text{ or } L_n\le B\}.
\]
Upon halting, the machine returns $\mathsf{Reliable}$ if $L_N\ge A$ and
$\mathsf{Unreliable}$ if $L_N\le B$.

\begin{theorem}[Almost-sure halting]\label{thm:sprt-halting}
The certification SOTM
$\SOTM_{\mathrm{SPRT}}(p_0,p_1,\varepsilon)$ halts almost surely under
every $p\in[0,1]$.
\end{theorem}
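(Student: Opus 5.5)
The plan is to reduce halting to a property of the i.i.d.\ score sequence and then use a block argument that is uniform in $p$. Under any $p\in[0,1]$, the scores $Z_1,Z_2,\ldots$ are i.i.d.\ $\Ber(p)$, by the construction of $\SOTM_{\mathrm{SPRT}}$. Hence the increments $\ell_i$ from \eqref{eq:ell} are i.i.d., taking the value $a:=\ln(p_1/p_0)>0$ with probability $p$ and the value $-b:=\ln((1-p_1)/(1-p_0))<0$ with probability $1-p$. It therefore suffices to show that the random walk $L_n$ started at $0$ exits the bounded interval $(B,A)=(-A,A)$ in finite time almost surely.

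The key step is a ``consecutive run'' argument, which avoids any case split on the sign of the drift and covers the degenerate endpoints $p\in\{0,1\}$. Set $k=\lceil 2A/a\rceil$ and $m=\lceil 2A/b\rceil$. If $p\ge 1/2$, then any $k$ consecutive scores all equal to $1$ raise $L$ by at least $2A$. Starting anywhere in $(-A,A)$, this forces $L\ge A$ by the end of the run, so the machine halts no later than that run. If $p<1/2$, the same holds for $m$ consecutive zeros, which force $L\le B$.

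Now partition the turns into disjoint blocks of length $K=\max(k,m)$. Let $\delta=\min\{p^K,(1-p)^K\}$ in whichever case applies; in each case $\delta\ge 2^{-K}>0$. Each block is, independently of the others, a ``forcing block'' with probability at least $\delta$. Consequently,
\[
\Pr_p(N>jK)\le (1-2^{-K})^j\to 0 \qquad (j\to\infty),
\]
which gives $\Pr_p(N<\infty)=1$. The same bound also shows that $N$ has a geometric tail, so $\E_p[N]<\infty$, which will be useful later for Wald's identity.

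The only point needing care is checking that a forcing block really triggers halting. The walk may already have stopped earlier, and that is fine, since we only need $N\le$ the end of the block. If the walk has not stopped, it sits in $(-A,A)$ at the start of the block, and the monotone run moves it by at least $2A$ in one direction, so it must cross a threshold. Everything else is routine, so I do not expect a genuine obstacle.
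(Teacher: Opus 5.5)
Your proof is correct, but it takes a different route from the paper. The paper splits on the sign of the drift $f(p)=\E_p[\ell_1]$. When $f(p)\neq 0$ it applies the strong law of large numbers to get $L_n\to\pm\infty$. At the unique zero-drift point $p^*\in(p_0,p_1)$ it invokes the oscillation property of nondegenerate mean-zero random walks ($\limsup L_n=+\infty$, $\liminf L_n=-\infty$ almost surely).

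You instead use Stein's classical block argument. A block of $K$ identical scores of the more likely type moves the walk by at least $2A$, which forces an exit from the bounded interval $(B,A)$ from any starting point, and disjoint blocks are independent. Your argument is more elementary: it needs no SLLN, no recurrence theory, and no case analysis on the drift. It also yields strictly more. The bound $\Pr_p(N>jK)\le(1-2^{-K})^j$ is uniform in $p$, so $N$ has a geometric tail and $\E_p[N]<\infty$ for every $p\in[0,1]$. This includes the ambiguity gap, where the paper's later bounds (Theorem~\ref{thm:sprt-sample}) give no integrability at all. The paper's route, in exchange, identifies the asymptotic direction of the walk. That fits the drift-based Wald analysis that follows, but it is not needed for halting.

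One slip to fix: you wrote $\delta=\min\{p^K,(1-p)^K\}$. That quantity is at most $2^{-K}$ and vanishes at $p\in\{0,1\}$, so the claim $\delta\ge 2^{-K}$ is false as written. What your argument actually uses is $\delta=p^K$ for $p\ge 1/2$ and $\delta=(1-p)^K$ for $p<1/2$, that is, $\delta=\max\{p,1-p\}^K\ge 2^{-K}$.
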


\begin{proof}
Since $Z_1,Z_2,\ldots$ are i.i.d.\ Bernoulli random variables with
parameter $p$, the increments $\ell_1,\ell_2,\ldots$ are also i.i.d.
For $p\in[0,1]$, define
\[
  f(p):=\E_p[\ell_1].
\]
By the definition of $\ell_1$,
\[
  f(p)
  =
  p\ln\frac{p_1}{p_0}
  +(1-p)\ln\frac{1-p_1}{1-p_0}.
\]
The function $f$ is affine and strictly increasing, since
\[
  f'(p)
  =
  \ln\frac{p_1(1-p_0)}{p_0(1-p_1)}
  >0.
\]
Moreover,
\[
  f(p_1)
  =
  \KL{p_1}{p_0}
  >0,
  \qquad
  f(p_0)
  =
  -\KL{p_0}{p_1}
  <0.
\]

Thus, $f(p)>0$ for every $p\ge p_1$ and $f(p)<0$ for every $p\le p_0$.
For $p\in(p_0,p_1)$, the mean increment may be positive, negative, or zero;
the zero case occurs only at the unique point $p^*$ defined below.

Suppose first that $f(p)\neq 0$. By the strong law of large numbers,
\[
  \frac{L_n}{n}
  =
  \frac{1}{n}\sum_{i=1}^n\ell_i
  \longrightarrow
  f(p)
  \qquad\text{almost surely}.
\]
If $f(p)>0$, then $L_n\to+\infty$ almost surely, whereas if $f(p)<0$,
then $L_n\to-\infty$ almost surely. Since the SOTM continues only while
$L_n\in(B,A)$, it follows that $N<\infty$ almost surely.

It remains to consider the case $f(p)=0$. Since $f(p_0)<0<f(p_1)$ and
$f$ is strictly increasing, there is a unique $p^*\in(p_0,p_1)$ such that
$f(p^*)=0$. Thus, under $p=p^*$, the increments of $L_n$ are bounded,
nonconstant, i.i.d.\ random variables with mean zero. By the oscillation
property of one-dimensional zero-mean random walks,
\[
  \limsup_{n\to\infty}L_n=+\infty,
  \qquad
  \liminf_{n\to\infty}L_n=-\infty
  \qquad\text{almost surely};
\]
see, for example, \cite{Feller1971}. Consequently, $L_n$ cannot remain
in the finite interval $(B,A)$ forever, and $N<\infty$ almost surely.

Therefore,
$\SOTM_{\mathrm{SPRT}}(p_0,p_1,\varepsilon)$ halts almost surely under
every $p\in[0,1]$.
\end{proof}

\begin{theorem}[Reliability guarantee]\label{thm:sprt-reliability}
The certification SOTM $\SOTM_{\mathrm{SPRT}}(p_0,p_1,\varepsilon)$ is
$\varepsilon$-error-bounded on the certifiable regions $p\ge p_1$ and
$p\le p_0$.
\end{theorem}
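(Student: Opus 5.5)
Halting almost surely is already given by Theorem~\ref{thm:sprt-halting}, so it remains to bound the two error probabilities. The plan is to first handle the boundary values $p=p_1$ and $p=p_0$ by Wald's change-of-measure argument. Then we extend to all $p\ge p_1$ and all $p\le p_0$ by a monotonicity (exponential tilting) argument.

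\emph{Boundary case.} Let $P_j$ denote the law of the i.i.d.\ $\Ber(p_j)$ score sequence, $j=0,1$. The event $\{v=\mathsf{Reliable}\}=\{L_N\ge A\}$ is determined by $(Z_1,\dots,Z_N)$. Decomposing over $\{N=n\}$ and using $dP_1/dP_0=e^{L_n}\ge e^{A}$ on $\{N=n,\ L_n\ge A\}$, we get
\[
\Pr_{p_0}(L_N\ge A)\le e^{-A}\Pr_{p_1}(L_N\ge A)\le e^{-A}=\frac{\varepsilon/2}{1-\varepsilon/2}<\varepsilon .
\]
Symmetrically, on $\{L_N\le B\}$ we have $e^{-L_n}\ge e^{-B}=e^{A}$, which gives $\Pr_{p_1}(L_N\le B)\le e^{-A}<\varepsilon$. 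Summing over $n$ is legitimate because $N<\infty$ a.s.\ under both measures by Theorem~\ref{thm:sprt-halting}.

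\emph{Extension to composite hypotheses.} This is the main obstacle: Definition~\ref{def:error-bounded} demands the bound for every $p\le p_0$ and every $p\ge p_1$, not just at the endpoints. I would first try the same change-of-measure trick with a tilted pair. Fix $p\le p_0$ and let $\theta=-1$ (the case $p\ge p_1$ uses $\theta=+1$ analogously). Because $L_n$ is a sum of i.i.d.\ increments, $\Pr_p(L_N\ge A)$ can be compared with the tilted measure under which each increment has density $e^{\theta' \ell}$ relative to $\Ber(p)$. One chooses $\theta'>0$ with $\E_p[e^{\theta'\ell_1}]=1$. Such a root exists since $\E_p[\ell_1]=f(p)<0$ and $\ell_1$ takes a positive value with positive probability when $p>0$; if $p=0$, then $L_n$ only decreases and the error is zero. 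Wald's identity then gives
\[
\Pr_p(L_N\ge A)\le e^{-\theta' A}.
\]
It remains to show $\theta'\ge 1$. The map $\theta\mapsto \E_p[e^{\theta\ell_1}]$ is convex and vanishes at $\theta=0$, and $\E_p[e^{\ell_1}]=p\frac{p_1}{p_0}+(1-p)\frac{1-p_1}{1-p_0}$ is affine and increasing in $p$ and equals $1$ at $p=p_0$. Hence for $p\le p_0$ we have $\E_p[e^{\ell_1}]\le 1$, so the positive root satisfies $\theta'\ge1$, and the error is at most $e^{-A}<\varepsilon$. For $p\ge p_1$, the same computation with $e^{-\ell_1}$ handles the other error.

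An alternative route, if the tilting details are messy, is a coupling argument. Since $\ell_i$ is increasing in $Z_i$, we can couple $Z_i=\mathbf 1[U_i\le p]$ with common uniforms $U_i$. Then $L_n^{(p)}$ is pointwise nondecreasing in $p$ for every $n$. On a sample path where the walk for $p\le p_0$ exits through $A$ at time $N$, every earlier value stays above $B$ for the $p_0$-walk as well, and the $p_0$-walk's value at $N$ is at least $A$. So the $p_0$-walk also exits through $A$, at some time $\le N$. Therefore $\Pr_p(L_N\ge A)\le\Pr_{p_0}(L_N\ge A)\le\varepsilon$, and symmetrically for $p\ge p_1$. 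I would present this coupling as the main argument, since it reduces everything to the boundary case cleanly.
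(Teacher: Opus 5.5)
Your proof is correct and your main route is the paper's: the change-of-measure bound at the endpoints $p_0,p_1$, followed by the common-uniforms coupling showing that the error event for any $p$ in a certifiable region is contained in the error event at the corresponding boundary value. Your exponential-tilting alternative is also valid and gives a bound $e^{-\theta' A}$ with $\theta'\ge 1$, which shows the error shrinking as $p$ moves away from the threshold; one fix is needed there: $\theta\mapsto\E_p[e^{\theta\ell_1}]$ equals $1$, not $0$, at $\theta=0$.
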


\begin{proof}
By Theorem~\ref{thm:sprt-halting}, the certification SOTM halts almost surely
under every $p\in[0,1]$. Recall that the scores $Z_1,Z_2,\ldots$ are i.i.d.\
random variables with Bernoulli distribution $\Ber(p)$.

Let $\mathcal{Z}_{\mathrm{stop}}$ denote the set of possible stopped score
sequences. For a sequence
$\mathbf{z}=(z_1,\ldots,z_n)\in\mathcal{Z}_{\mathrm{stop}}$, write
\[
  \Pr_p(\mathbf{Z}^{\mathrm{stop}}=\mathbf{z})
  :=
  \Pr_p\bigl(N=n,Z_1=z_1,\ldots,Z_n=z_n\bigr)
\]
for the probability that the SOTM stops with score sequence $\mathbf{z}$.
Its stopped log-likelihood ratio is
\[
  L_n(\mathbf{z})
  =
  \sum_{i=1}^{n}
  \ln\frac{\Ber(p_1)(z_i)}{\Ber(p_0)(z_i)}.
\]
Because the SPRT stopping rule is determined by the score sequence, each
$\mathbf{z}\in\mathcal{Z}_{\mathrm{stop}}$ specifies the event that the
SOTM stops at turn $n$ with scores $\mathbf{z}$.

By independence,
\[
  \Pr_{p_1}(\mathbf{Z}^{\mathrm{stop}}=\mathbf{z})
  =
  \prod_{i=1}^{n}\Ber(p_1)(z_i),
\]
and similarly
\[
  \Pr_{p_0}(\mathbf{Z}^{\mathrm{stop}}=\mathbf{z})
  =
  \prod_{i=1}^{n}\Ber(p_0)(z_i).
\]
Therefore,
\[
  \frac{
    \Pr_{p_1}(\mathbf{Z}^{\mathrm{stop}}=\mathbf{z})
  }{
    \Pr_{p_0}(\mathbf{Z}^{\mathrm{stop}}=\mathbf{z})
  }
  =
  \prod_{i=1}^{n}
  \frac{\Ber(p_1)(z_i)}{\Ber(p_0)(z_i)}
  =
  e^{L_n(\mathbf{z})}.
\]
Equivalently,
\[
  \Pr_{p_1}(\mathbf{Z}^{\mathrm{stop}}=\mathbf{z})
  =
  e^{L_n(\mathbf{z})}
  \Pr_{p_0}(\mathbf{Z}^{\mathrm{stop}}=\mathbf{z}).
\]

Let $E$ be an event depending only on the stopped score sequence. Then
\begin{align}
  \Pr_{p_1}(E)
  &=
  \sum_{\mathbf{z}\in E}
  e^{L_n(\mathbf{z})}
  \Pr_{p_0}(\mathbf{Z}^{\mathrm{stop}}=\mathbf{z}) \notag\\
  &=
  \E_{p_0}\!\left[\mathbf{1}_E e^{L_N}\right].
  \label{eq:identity}
\end{align}

If $p\ge p_1$, the correct verdict is $\mathsf{Reliable}$, and the error
event is $\{L_N\le B\}$. We first prove the error bound at $p=p_1$.

Apply identity~\eqref{eq:identity} to $E=\{L_N\le B\}$. Since
$e^{L_N}\le e^B$ on this event,
\[
  \Pr_{p_1}(L_N\le B)
  =
  \E_{p_0}\!\left[\mathbf{1}_{\{L_N\le B\}}e^{L_N}\right]
  \le
  e^B\Pr_{p_0}(L_N\le B)
  \le
  e^B.
\]
Since $B=-A$ and
\[
  A=\ln\frac{1-\varepsilon/2}{\varepsilon/2},
\]
we have
\[
  e^B
  =
  \frac{\varepsilon/2}{1-\varepsilon/2}
  \le
  \varepsilon.
\]
Hence
\[
  \Pr_{p_1}(L_N\le B)\le\varepsilon.
\]

It remains to extend this bound to every $p\ge p_1$. To compare the score
distributions under $p$ and $p_1$ pointwise, we represent them on a common
probability space using the same underlying random draws. Let
$U_1,U_2,\ldots$ be independent uniform random variables on $[0,1]$. For
each $\pi\in[0,1]$, define
\[
  Z_i(\pi)=\mathbf{1}_{\{U_i\le\pi\}}.
\]
Then $Z_i(\pi)\sim\Ber(\pi)$. Since $p\ge p_1$, we have
\[
  \begin{array}{c|cc}
    \text{condition} & Z_i(p_1) & Z_i(p) \\
    \hline
    U_i\le p_1 & 1 & 1 \\
    p_1<U_i\le p & 0 & 1 \\
    U_i>p & 0 & 0
  \end{array}
\]
and hence
\[
  Z_i(p)\ge Z_i(p_1)
  \qquad\text{for every }i.
\]
Let $L_m(\pi)$ denote the log-likelihood ratio constructed from
$Z_1(\pi),\ldots,Z_m(\pi)$. Since the log-likelihood increment is larger
when $Z_i=1$ than when $Z_i=0$,
\[
  L_m(p)\ge L_m(p_1)
  \qquad\text{for every }m.
\]
For each $\pi\in[0,1]$, let
\[
  N_\pi
  :=
  \inf\{m\ge1:L_m(\pi)\notin(B,A)\}
\]
be the first exit time of the coupled log-likelihood ratio. Suppose that the
SOTM with parameter~$p$ halts with verdict $\mathsf{Unreliable}$,
equivalently, that
\[
  L_{N_p}(p)\le B.
\]
For every $m<N_p$, we have $L_m(p)<A$. Since
$L_m(p_1)\le L_m(p)$, the SOTM with parameter~$p_1$ cannot halt with verdict
$\mathsf{Reliable}$ before turn $N_p$. At turn $N_p$,
\[
  L_{N_p}(p_1)
  \le
  L_{N_p}(p)
  \le
  B.
\]
Thus the SOTM with parameter~$p_1$ halts with verdict
$\mathsf{Unreliable}$ no later than turn $N_p$. 
Let
\[
  E_\pi^-
  :=
  \{\text{the SOTM with parameter $\pi$ halts with verdict
  $\mathsf{Unreliable}$}\}.
\]
The preceding argument shows that, for every realization of the shared
uniform random variables,
\[
  E_p^-\subseteq E_{p_1}^-.
\]
Indeed, whenever the SOTM with parameter~$p$ halts with verdict
$\mathsf{Unreliable}$, the SOTM with parameter~$p_1$ has reached the lower
boundary by that same turn, possibly earlier, and therefore also halts with
verdict $\mathsf{Unreliable}$. Taking probabilities yields
\[
  \Pr_p(v=\mathsf{Unreliable})
  \le
  \Pr_{p_1}(v=\mathsf{Unreliable})
  \le
  \varepsilon
  \qquad\text{for every }p\ge p_1.
\]

The argument for $p\le p_0$ is symmetric. If $p\le p_0$, the correct
verdict is $\mathsf{Unreliable}$, and the error event is $\{L_N\ge A\}$.
Interchanging the roles of $p_0$ and $p_1$ gives
\[
  \Pr_{p_0}(L_N\ge A)
  \le
  e^{-A}
  =
  \frac{\varepsilon/2}{1-\varepsilon/2}
  \le
  \varepsilon.
\]
The same monotonicity coupling yields
\[
  \Pr_p(v=\mathsf{Reliable})
  =
  \Pr_p(L_N\ge A)
  \le
  \Pr_{p_0}(L_N\ge A)
  \le
  \varepsilon
  \qquad\text{for every }p\le p_0.
\]
Thus the certification SOTM is $\varepsilon$-error-bounded on the
certifiable regions $p\ge p_1$ and $p\le p_0$.
\end{proof}

To bound the expected sample size, assume that $N<\infty$. Define the upper- and lower-boundary overshoots
by
\[
  O_A=(L_N-A)\mathbf{1}_{\{L_N\ge A\}},
  \qquad
  O_B=(B-L_N)\mathbf{1}_{\{L_N\le B\}}.
\]
Since the SOTM halts through at most one boundary,
\[
  O:=\max\{O_A,O_B\}=O_A+O_B.
\]
The largest upward increment of $L_i$ is
\[
  \ln\frac{p_1}{p_0},
\]
and the magnitude of the largest downward increment is
\[
  \ln\frac{1-p_0}{1-p_1}.
\]
Therefore,
\[
  0\le O\le\Delta(p_0,p_1),
\]
where
\[
  \Delta(p_0,p_1)
  :=
  \max\!\left\{
    \ln\frac{p_1}{p_0},
    \ln\frac{1-p_0}{1-p_1}
  \right\}.
\]

Define the normalized overshoot bounds
\[
  C_1(p_0,p_1)
  :=
  \frac{\Delta(p_0,p_1)}{\KL{p_1}{p_0}},
  \qquad
  C_0(p_0,p_1)
  :=
  \frac{\Delta(p_0,p_1)}{\KL{p_0}{p_1}}.
\]
They are the worst-case overshoot bounds normalized by the relevant drifts.

For the certification SOTM
$\SOTM_{\mathrm{SPRT}}(p_0,p_1,\varepsilon)$, define the per-turn token cost
\[
  W_i^{\alpha,\beta}
  =
  \alpha|\tau(x_i)|+\beta|\tau(r_i)|.
\]
Since the query--response turns are i.i.d., the expected per-turn token cost
is the same at every turn. For $p\in[0,1]$, write
\[
  \bar c_{p,\alpha,\beta}
  :=
  \E_p[W_1^{\alpha,\beta}].
\]

\begin{theorem}[Expected sample size]\label{thm:sprt-sample}
For the certification SOTM
$\SOTM_{\mathrm{SPRT}}(p_0,p_1,\varepsilon)$,
\begin{align}
  \E_p[N]
  &\le
  \frac{
    \ln\!\bigl(\tfrac{1-\varepsilon/2}{\varepsilon/2}\bigr)
  }{\KL{p_1}{p_0}}
  +
  C_1(p_0,p_1),
  \qquad \text{for every } p\ge p_1.
  \label{eq:sample} \\
  \E_p[N]
  &\le
  \frac{
    \ln\!\bigl(\tfrac{1-\varepsilon/2}{\varepsilon/2}\bigr)
  }{\KL{p_0}{p_1}}
  +
  C_0(p_0,p_1),
  \qquad \text{for every } p\le p_0.
  \label{eq:sample-symmetric}
\end{align}
\end{theorem}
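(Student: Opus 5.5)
Since $L_n$ is a random walk with i.i.d.\ bounded increments $\ell_i$ of mean $f(p)$ (notation from the proof of Theorem~\ref{thm:sprt-halting}), the plan is to use Wald's identity $\E_p[L_N]=f(p)\,\E_p[N]$. Wald's identity needs $\E_p[N]<\infty$, which is stronger than the almost-sure halting in Theorem~\ref{thm:sprt-halting}. I would obtain it in the standard way. There is an integer $k$ such that $k$ consecutive increments of the same sign carry $L$ across the whole interval $(B,A)$, and this event has probability $\delta>0$. (For $p\ge p_1$, take $k$ consecutive ones; for $p\le p_0$, take $k$ consecutive zeros; since $p_1<1$ and $p_0>0$, the needed probabilities are positive whenever $p\in(0,1)$.) Hence $\Pr_p(N>jk)\le(1-\delta)^j$, so $N$ has geometric tails and $\E_p[N]<\infty$. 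At the endpoints $p=1$ or $p=0$ the walk is deterministic and $N$ is bounded outright.

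For $p\ge p_1$, we have $f(p)\ge f(p_1)=\KL{p_1}{p_0}>0$, because $f$ is increasing. We next bound $\E_p[L_N]$ from above. On $\{L_N\ge A\}$ we have $L_N=A+O_A\le A+\Delta(p_0,p_1)$. On $\{L_N\le B\}$ we have $L_N\le B<0\le A+\Delta$. In both cases $L_N\le A+\Delta(p_0,p_1)$, using the overshoot bound $0\le O\le\Delta$ established above. Wald's identity then gives
\[
\E_p[N]\le \frac{A+\Delta(p_0,p_1)}{f(p)}\le \frac{A}{\KL{p_1}{p_0}}+\frac{\Delta(p_0,p_1)}{\KL{p_1}{p_0}},
\]
which is \eqref{eq:sample}, since $A=\ln\frac{1-\varepsilon/2}{\varepsilon/2}$ and the last term is $C_1(p_0,p_1)$. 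Here we use $A+\Delta>0$ together with $f(p)\ge \KL{p_1}{p_0}$.

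The case $p\le p_0$ is symmetric. Now $f(p)\le -\KL{p_0}{p_1}<0$, and $L_N\ge B-\Delta=-(A+\Delta)$ in all cases. Wald's identity gives $-\E_p[N]\,|f(p)|=\E_p[L_N]\ge-(A+\Delta)$, so $\E_p[N]\le (A+\Delta)/\KL{p_0}{p_1}$, which is \eqref{eq:sample-symmetric}.

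The only delicate step is justifying Wald's identity, that is, proving integrability of $N$. Integrability of the increments is automatic because they are bounded. The geometric-tail argument above handles $\E_p[N]<\infty$ directly, so I do not expect a real obstacle. One should only check that the run of $k$ same-sign increments exits through the boundary the run pushes toward, whichever sign is chosen. It does, because $k$ is chosen so that $k$ such steps span more than $A-B=2A$ from any starting point in $(B,A)$.
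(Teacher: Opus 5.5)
Your proof is correct, and its core is the same as the paper's: Wald's identity for $L_n$, the overshoot bound $L_N\le A+\Delta(p_0,p_1)$ (resp.\ $L_N\ge-(A+\Delta(p_0,p_1))$), and monotonicity of $f$ to replace $f(p)$ by $\KL{p_1}{p_0}$ (resp.\ $-f(p)$ by $\KL{p_0}{p_1}$).

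You differ only in the integrability step you flagged. You first prove $\E_p[N]<\infty$ by a Stein-type geometric-tail argument and then apply Wald's identity at $N$. The paper instead applies Wald's identity to the bounded stopping time $N\wedge t$. It notes that $L_{N\wedge t}\le A+\Delta(p_0,p_1)$ also holds on $\{N>t\}$, because then $L_t\in(B,A)$. This gives $\E_p[N\wedge t]\le (A+\Delta(p_0,p_1))/\KL{p_1}{p_0}$ uniformly in $t$, and monotone convergence finishes. In that route, finiteness of $\E_p[N]$ is a byproduct rather than a prerequisite.

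The truncation route is shorter and needs no auxiliary lemma. Yours costs an extra step but gives more. It yields geometric tails for $N$. Since runs of ones have positive probability whenever $p>0$ and runs of zeros whenever $p<1$, it also gives $\E_p[N]<\infty$ for every $p\in[0,1]$, including the ambiguity gap. That in turn furnishes an elementary proof of Theorem~\ref{thm:sprt-halting} without the strong law or the oscillation of zero-mean random walks.
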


\begin{proof}
We first prove the expected sample-size bound. Fix $p\ge p_1$.
For $t\ge1$, let $N_t:=N\wedge t=\min\{N,t\}$.
Recall that the log-likelihood increments
\[
  \ell_i
  =
  Z_i\ln\frac{p_1}{p_0}
  +(1-Z_i)\ln\frac{1-p_1}{1-p_0}
\]
are i.i.d.\ under~$\Pr_p$, with mean
\[
  f(p)
  =
  \E_p[\ell_i]
  =
  p\ln\frac{p_1}{p_0}
  +(1-p)\ln\frac{1-p_1}{1-p_0}.
\]
Since $f(p)$ is increasing in~$p$ (see the proof of Theorem \ref{thm:sprt-reliability}),
\[
  f(p)
  \ge
  f(p_1)
  =
  \KL{p_1}{p_0}
  \qquad\text{for every }p\ge p_1.
\]

Since $N_t$ is a bounded stopping time, Wald's identity gives
\[
  \E_p[L_{N_t}]
  =
  \E_p[N_t]f(p).
\]
If $N\le t$, then
\[
  L_{N_t}
  =
  L_N
  \le
  A+O_A
  \le
  A+\Delta(p_0,p_1).
\]
If $N>t$, then $L_{N_t}=L_t\in(B,A)$. Thus, in all cases,
\[
  L_{N_t}\le A+\Delta(p_0,p_1).
\]
Consequently,
\[
  \E_p[N_t]
  =
  \frac{\E_p[L_{N_t}]}{f(p)}
  \le
  \frac{A+\Delta(p_0,p_1)}{\KL{p_1}{p_0}}
  =
  \frac{A}{\KL{p_1}{p_0}}
  +
  C_1(p_0,p_1).
\]
Letting $t\to\infty$ and applying monotone convergence yields
\[
  \E_p[N]
  \le
  \frac{A}{\KL{p_1}{p_0}}
  +
  C_1(p_0,p_1).
\]
Since
\[
  A
  =
  \ln\frac{1-\varepsilon/2}{\varepsilon/2},
\]
this proves the sample-size bound~\eqref{eq:sample}.

Now fix $p\le p_0$. Since $f(p)$ is increasing,
\[
  -f(p)
  \ge
  -f(p_0)
  =
  \KL{p_0}{p_1}.
\]
If $N\le t$, then
\begin{align*}
L_{N_t} &= L_N = B - O_B, \\
  -L_{N_t}
  &=
  -B+O_B
  \le
  A+\Delta(p_0,p_1).
\end{align*}
If $N>t$, then $L_{N_t}=L_t\in(B,A)$, and hence
\[
  -L_{N_t}< -B=A.
\]
Therefore,
\[
  -L_{N_t}\le A+\Delta(p_0,p_1).
\]
Applying Wald's identity to the bounded stopping time $N_t$ gives
\[
  \E_p[-L_{N_t}]
  =
  -f(p)\,\E_p[N_t].
\]
Since
\[
  -f(p)
  \ge
  -f(p_0)
  =
  \KL{p_0}{p_1},
\]
we obtain
\[
  \E_p[N_t]
  =
  \frac{\E_p[-L_{N_t}]}{-f(p)}
  \le
  \frac{A+\Delta(p_0,p_1)}{\KL{p_0}{p_1}}
  =
  \frac{A}{\KL{p_0}{p_1}}
  +
  C_0(p_0,p_1).
\]
Letting $t\to\infty$ and applying monotone convergence proves the sample-size bound
\eqref{eq:sample-symmetric}.
\end{proof}

\begin{theorem}[Expected token cost]\label{thm:sprt-cost}
For the certification SOTM
$\SOTM_{\mathrm{SPRT}}(p_0,p_1,\varepsilon)$,
\begin{align}
  \E_p[\TOK]
  &=
  \bar c_{p,\alpha,\beta}\,\E_p[N]. \label{eq:equality}\\
  \E_p[\TOK]
  &\le
  \bar c_{p,\alpha,\beta}
  \left(
    \frac{
      \ln\!\bigl(\tfrac{1-\varepsilon/2}{\varepsilon/2}\bigr)
    }{\KL{p_1}{p_0}}
    +
    C_1(p_0,p_1)
  \right),
  \qquad \text{for every } p\ge p_1.
  \label{eq:cost1} \\
  \E_p[\TOK]
  &\le
  \bar c_{p,\alpha,\beta}
  \left(
    \frac{
      \ln\!\bigl(\tfrac{1-\varepsilon/2}{\varepsilon/2}\bigr)
    }{\KL{p_0}{p_1}}
    +
    C_0(p_0,p_1)
  \right),
  \qquad \text{for every } p\le p_0.
  \label{eq:cost0}
\end{align}
\end{theorem}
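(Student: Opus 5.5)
The plan is to prove the identity \eqref{eq:equality} by Wald's identity applied to the per-turn costs. The two inequalities \eqref{eq:cost1} and \eqref{eq:cost0} then follow at once by substituting the sample-size bounds \eqref{eq:sample} and \eqref{eq:sample-symmetric} of Theorem~\ref{thm:sprt-sample}.

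First, write the total cost as a random sum,
\[
  \TOK_{\SOTM}^{\alpha,\beta}=\sum_{i=1}^{N} W_i^{\alpha,\beta}
  =\sum_{i=1}^{\infty} W_i^{\alpha,\beta}\mathbf{1}_{\{N\ge i\}},
\]
which is valid because $N<\infty$ almost surely by Theorem~\ref{thm:sprt-halting}.

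Next, establish the independence structure. The turns $(x_i,r_i)$ are i.i.d.\ under $\Pr_p$, since each query is drawn fresh from $\mathcal{D}_X$ and each response from $\mathcal{D}_{x_i}$. Hence the pairs $(W_i^{\alpha,\beta},Z_i)$ are i.i.d. The event $\{N\ge i\}=\{N\le i-1\}^c$ is determined by $Z_1,\dots,Z_{i-1}$, so it is a function of turns $1,\dots,i-1$ only. It is therefore independent of $W_i^{\alpha,\beta}$.

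Then take expectations term by term. All terms are nonnegative, so Tonelli applies and gives
\[
  \E_p[\TOK]=\sum_{i\ge1}\E_p[W_i^{\alpha,\beta}]\Pr_p(N\ge i)
  =\bar c_{p,\alpha,\beta}\sum_{i\ge1}\Pr_p(N\ge i)
  =\bar c_{p,\alpha,\beta}\,\E_p[N].
\]
This identity holds even if $\E_p[N]=\infty$, although in the certifiable regions $\E_p[N]$ is finite by Theorem~\ref{thm:sprt-sample}. The bounded per-turn cost also ensures that $\bar c_{p,\alpha,\beta}<\infty$, by the length-control assumption on $X$ and on responses.

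The main subtlety is that $W_i$ and $Z_i$ are dependent within a single turn. Long responses may correlate with correctness, so one cannot treat $N$ as independent of the costs overall. The argument above sidesteps this by using only that $\{N\ge i\}$ depends on turns strictly before $i$. This is precisely the predictability required in Wald's identity, and it is the point I would state carefully.

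Finally, for $p\ge p_1$, multiply \eqref{eq:sample} by $\bar c_{p,\alpha,\beta}\ge0$ to obtain \eqref{eq:cost1}. For $p\le p_0$, do the same with \eqref{eq:sample-symmetric} to obtain \eqref{eq:cost0}.
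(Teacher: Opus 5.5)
Your proof is correct and follows the paper's route. Wald's identity for the random sum $\sum_{i=1}^N W_i^{\alpha,\beta}$ of i.i.d.\ per-turn costs gives \eqref{eq:equality}, and multiplying the sample-size bounds of Theorem~\ref{thm:sprt-sample} by $\bar c_{p,\alpha,\beta}$ gives \eqref{eq:cost1} and \eqref{eq:cost0}.

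Your explicit Tonelli/predictability derivation is slightly stronger than the paper's citation of Wald's identity. It does not need $\E_p[N]<\infty$, so it also justifies \eqref{eq:equality} for $p\in(p_0,p_1)$, where Theorem~\ref{thm:sprt-sample} gives no finiteness guarantee.
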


\begin{proof}
Since the per-turn token costs are i.i.d.\ and bounded, and $N$ is a
stopping time with $\E_p[N]<\infty$ by Theorem~\ref{thm:sprt-sample}, Wald's identity gives the expected token cost~(\ref{eq:equality}):
\[
  \E_p[\TOK]
  =
  \E_p\!\left[\sum_{n=1}^{N}W_n^{\alpha,\beta}\right]
  =
  \bar c_{p,\alpha,\beta}\,\E_p[N].
\]
Combining this identity with the sample-size bound~\eqref{eq:sample} proves the cost bound~\eqref{eq:cost1},
and combining it with the sample-size bound~\eqref{eq:sample-symmetric} proves the cost bound~\eqref{eq:cost0}.
\end{proof}

\begin{corollary}[Small-gap bounds]\label{cor:small-gap}
For the certification SOTM
$\SOTM_{\mathrm{SPRT}}(p_0,p_1,\varepsilon)$, let $\Delta=p_1-p_0$.
For small $\Delta$, the Bernoulli divergences satisfy
\begin{align*}
  \KL{p_1}{p_0}
  &=
  \frac{\Delta^2}{2p_0(1-p_0)}
  +
  O(\Delta^3), \\
\KL{p_0}{p_1}
  &=
  \frac{\Delta^2}{2p_1(1-p_1)}
  +
  O(\Delta^3).
\end{align*}
Consequently, for $p\ge p_1$, the expected sample-size bound has the approximation
\[
  \E_p[N]
  \lesssim
  \frac{
    2p_0(1-p_0)
    \ln\!\bigl(\tfrac{1-\varepsilon/2}{\varepsilon/2}\bigr)
  }{\Delta^2}
  +
  C(p_0,p_1),
\]
and the corresponding expected token-cost bound has the approximation
\[
  \E_p[\TOK]
  \lesssim
  \bar c_p
  \left(
    \frac{
      2p_0(1-p_0)
      \ln\!\bigl(\tfrac{1-\varepsilon/2}{\varepsilon/2}\bigr)
    }{\Delta^2}
    +
    C(p_0,p_1)
  \right).
\]

Similarly, for $p\le p_0$,
\[
  \E_p[N]
  \lesssim
  \frac{
    2p_1(1-p_1)
    \ln\!\bigl(\tfrac{1-\varepsilon/2}{\varepsilon/2}\bigr)
  }{\Delta^2}
  +
  C(p_1,p_0),
\]
and
\[
  \E_p[\TOK]
  \lesssim
  \bar c_p
  \left(
    \frac{
      2p_1(1-p_1)
      \ln\!\bigl(\tfrac{1-\varepsilon/2}{\varepsilon/2}\bigr)
    }{\Delta^2}
    +
    C(p_1,p_0)
  \right).
\]
Thus, in both certifiable regions, the dominant sample and token costs
scale as $\Delta^{-2}$.
\end{corollary}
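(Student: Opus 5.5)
The plan is a second-order Taylor expansion of the Bernoulli divergence around the base point, followed by substitution into Theorem~\ref{thm:sprt-sample} and Theorem~\ref{thm:sprt-cost}. For fixed $p_0\in(0,1)$, set $g(q):=\KL{q}{p_0}=q\ln\frac{q}{p_0}+(1-q)\ln\frac{1-q}{1-p_0}$ and expand it around $q=p_0$. Direct differentiation gives the following values.
\begin{itemize}
\item $g(p_0)=0$.
\item $g'(q)=\ln\frac{q}{p_0}-\ln\frac{1-q}{1-p_0}$, so $g'(p_0)=0$.
\item $g''(q)=\frac1q+\frac1{1-q}=\frac{1}{q(1-q)}$.
\item $g'''(q)=-\frac1{q^2}+\frac1{(1-q)^2}$.
\end{itemize}
By Taylor's theorem with Lagrange remainder at $q=p_1=p_0+\Delta$,
\[
\KL{p_1}{p_0}=\frac{\Delta^2}{2p_0(1-p_0)}+\frac{g'''(\xi)}{6}\Delta^3,
\]
for some $\xi\in(p_0,p_1)$. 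For small $\Delta$, say $p_1\le (1+p_0)/2$, the derivative $g'''$ is bounded on $[p_0,p_1]$ by a constant depending only on $p_0$. This gives the first expansion.

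The second expansion follows the same way, using $h(q):=\KL{q}{p_1}$ expanded around $q=p_1$ and evaluated at $q=p_0=p_1-\Delta$. This yields $\Delta^2/(2p_1(1-p_1))+O(\Delta^3)$. Alternatively, one can expand $\KL{p_0}{\cdot}$ in its second argument. Either way the leading Fisher-information term is $1/(p(1-p))$ evaluated at the appropriate endpoint. Expanding in the first argument with the second fixed avoids any cross terms, which is why I would pick that route.

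To obtain the approximate bounds, invert the expansions:
\[
\frac{1}{\KL{p_1}{p_0}}=\frac{2p_0(1-p_0)}{\Delta^2}\bigl(1+O(\Delta)\bigr).
\]
Plugging this into \eqref{eq:sample} gives the stated $\lesssim$ bound for $p\ge p_1$. Here the symbol $\lesssim$ means equality up to a factor $1+O(\Delta)$ in the leading term. The additive term $C(p_0,p_1)$ is read as $C_1(p_0,p_1)$. Multiplying by $\bar c_p=\bar c_{p,\alpha,\beta}$ and using \eqref{eq:equality}, which gives $\E_p[\TOK]=\bar c_p\E_p[N]$, together with \eqref{eq:cost1}, yields the token-cost approximation. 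The case $p\le p_0$ is symmetric, using \eqref{eq:sample-symmetric} and \eqref{eq:cost0}, with $C(p_1,p_0)$ read as $C_0(p_0,p_1)$.

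For the final $\Delta^{-2}$ scaling claim, note that $\Delta(p_0,p_1)=O(\Delta)$, because both logarithms are $O(\Delta)$. The normalized overshoots $C_1$ and $C_0$ are therefore $O(\Delta)/\Theta(\Delta^2)=O(\Delta^{-1})$, which is lower order than the $\ln(1/\varepsilon)\,\Delta^{-2}$ leading term.

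The main obstacle is interpretive rather than technical. The statement's $\lesssim$ and the unnamed constant $C$ must be pinned down so that the claim is true. I would make the remainder uniform by fixing $p_0$, or more generally by keeping $p_0,p_1$ in a compact subinterval of $(0,1)$, so that all $O(\cdot)$ constants depend only on that interval. I would also state explicitly that the overshoot terms are $O(\Delta^{-1})$, so that the dominant term is indeed the $\Delta^{-2}$ one.
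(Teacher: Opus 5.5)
Your proposal is correct and takes the same route as the paper's proof sketch: a second-order Taylor expansion of the Bernoulli divergence in $\Delta$, then substitution into Theorem~\ref{thm:sprt-sample} and Theorem~\ref{thm:sprt-cost}. Your version is more careful than the sketch, since you bound the remainder explicitly, read $C(p_0,p_1)$ and $C(p_1,p_0)$ as $C_1(p_0,p_1)$ and $C_0(p_0,p_1)$, and check that these overshoot terms are only $O(\Delta^{-1})$, which is what actually justifies the closing $\Delta^{-2}$ scaling claim.
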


\begin{proof}[Proof Sketch]
The approximation comes from a second-order Taylor expansion of the Bernoulli KL divergence in $\Delta$ around $\Delta = 0$.
\end{proof}

\begin{remark}[Application]
To determine whether an LLM meets a target reliability level
$p_{\min}$ for a given application domain, set $p_1=p_{\min}$ and
choose $p_0<p_{\min}$ as an acceptable rejection margin. If the
certification SOTM outputs $\mathsf{Reliable}$, then the oracle is
certified as meeting the target reliability level, with error probability
at most~$\varepsilon$ on the certifiable region $p\ge p_1$. If it outputs
$\mathsf{Unreliable}$, then the oracle is certified as falling below the
baseline level, with error probability at most~$\varepsilon$ on the
certifiable region $p\le p_0$. The gap $(p_0,p_1)$ is a precision
parameter: narrowing it sharpens the distinction at the cost of more
queries.
\end{remark}

\begin{remark}[Computability]
The sample-size bounds in Theorem~\ref{thm:sprt-sample} can be estimated from
$\varepsilon,p_0,p_1$ before running the certification SOTM
$\SOTM_{\mathrm{SPRT}}(p_0,p_1,\varepsilon)$. The corresponding
expected token-cost bounds also depend on the one-turn expected token
cost $\bar c_{p,\alpha,\beta}$.
In the finite evaluation-domain and length-controlled response setting,
this quantity is finite and can be estimated from the evaluation set, the
chosen token-cost parameters, and a small pilot sample of oracle
responses before certification begins.
\end{remark}

\begin{remark}[Sample size cap]
Although the certification SOTM
$\SOTM_{\mathrm{SPRT}}(p_0,p_1,\varepsilon)$ halts almost surely, a finite
implementation may still impose a maximum turn budget~$N_{\max}$. If the
certification SOTM has not halted by turn~$N_{\max}$, the implementation may return a
third verdict $\mathsf{Inconclusive}$. This verdict is outside the
two-sided error guarantee: the guarantee in
Theorem~\ref{thm:sprt-reliability} applies to runs that return either
$\mathsf{Reliable}$ or $\mathsf{Unreliable}$. The expected number of
turns in Theorem~\ref{thm:sprt-sample} provides a principled basis for
choosing~$N_{\max}$; in practice, setting $N_{\max}$ to a suitable
multiple of the expected sample-size bound makes inconclusive outcomes
rare when the oracle lies away from the ambiguity gap.
\end{remark}

\begin{remark}[Budget caveats]
Two caveats apply when using Theorem~\ref{thm:sprt-cost} for
provisioning. First, the theorem gives an \emph{expected} 
token-cost bound, not a deterministic worst-case budget. A fixed
implementation budget should therefore include additional slack if one
wants inconclusive outcomes to be rare. Second, the stated bounds are
designed for the certifiable regions $p\ge p_1$ and $p\le p_0$. When the
true reliability lies inside the ambiguity gap $(p_0,p_1)$, the SOTM is
not required to certify the oracle in either direction, and the stopping
time can be substantially larger. This reflects the intrinsic difficulty of distinguishing two nearby
Bernoulli reliability levels: when $p_1-p_0$ is small, each observation
carries little information about which side of the certification problem
the oracle lies on.
\end{remark}

\begin{remark}[Attribution]
The sample complexity of distinguishing two Bernoulli reliability levels
$p_0$ and $p_1$ is classical, going back to Wald's analysis of the sequential probability
ratio test and to information-theoretic testing bounds such as the
Chernoff--Stein lemma and Le~Cam's two-point method
\cite{Wald1947,CoverThomas2006,Tsybakov2009}. 
\end{remark}


\section{A Matching Reliability Certification Lower Bound}\label{sec:lower}

Theorem~\ref{thm:sprt-cost} gives an upper bound on the sample size and token
costs of reliability certification. We now prove a lower bound in a general setting that permits the
certification SOTM to choose each query adaptively from its previous
interaction history. The lower bound assumes that each query provides at most a uniformly bounded
amount of information for distinguishing the two boundary hypotheses. This assumption prevents any single query from providing arbitrarily
strong evidence for making that distinction.

\begin{definition}[Adaptive certification SOTM]
\label{def:adaptive-certification-model}
Let $H_0$ and $H_1$ be two hypotheses such that,
for each $b\in\{0,1\}$ and query $x\in X$, the oracle response has
distribution $P_{b,x}$ on $\alphabet^*$. Assume that
\[
  \E_{x\sim\mathcal D_X,\ r\sim P_{b,x}}[S(x,r)]
  =
  p_b
  \qquad\text{for }b\in\{0,1\}.
\]

An \emph{adaptive certification SOTM} proceeds sequentially. At turn $n$,
it selects a query $x_n\in X$ using its internal randomness $U$ and the
previous transcript $(x_1,r_1,\ldots,x_{n-1},r_{n-1})$.
Conditional on $x_n$, the oracle returns a response
\[
  r_n\sim P_{b,x_n}
\]
independently of the previous turns under hypothesis $H_b$.
The SOTM then
computes the score
\[
  Z_n=S(x_n,r_n).
\]
Its stopping rule and final verdict may depend on the transcript and its
internal randomness.
\end{definition}

\begin{theorem}[Adaptive certification lower bound]
\label{thm:adaptive-certification-lower}
Fix two hypotheses $H_0$ and $H_1$, where $H_b$ indexes an oracle $\oracle$ with
reliability $p_b$ for $b\in\{0,1\}$. Let
$\SOTM=(M,\oracle)$ be an adaptive certification SOTM that is
$\varepsilon$-error-bounded in the sense of
Definition~\ref{def:error-bounded}, for some $\varepsilon\in(0,1/2)$.
Assume that there is a common constant $I_{\max}<\infty$ such that
\[
  \KL{P_{1,x}}{P_{0,x}}
  \le
  I_{\max}
  \qquad\text{for every }x\in X.
\]
Let $N$ be the number of turns before the SOTM halts. Then
\[
  \E_{H_1}[N]
  \ge
  \frac{\KL{1-\varepsilon}{\varepsilon}}{I_{\max}}.
\]
Consequently, as $\varepsilon\to0$,
\[
  \E_{H_1}[N]
  \ge
  (1+o(1))
  \frac{\ln(1/\varepsilon)}{I_{\max}}.
\]
\end{theorem}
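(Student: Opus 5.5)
The proof is the standard change-of-measure argument for sequential tests. It combines a data-processing inequality for the KL divergence between the two laws of the stopped transcript with a Wald-type chain-rule identity for that divergence under adaptive query selection.

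Let $\mathbb P_b$ denote the law of the full stopped transcript $(U,x_1,r_1,\ldots,x_N,r_N)$ under $H_b$. Include the internal randomness $U$, since both the stopping rule and the verdict may depend on it. The plan has three steps.

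\textbf{Step 1: chain rule.} I will show
\[
  \KL{\mathbb P_1}{\mathbb P_0}
  =
  \E_{H_1}\!\left[\sum_{n=1}^{N}\KL{P_{1,x_n}}{P_{0,x_n}}\right].
\]
The law of $U$ is the same under both hypotheses. The query kernel $x_n\mid(U,\mathcal T_{n-1})$ is also the same under both hypotheses, because it is a function of the machine alone. So the log-likelihood ratio of the transcript up to turn $n$ is
\[
  \Lambda_n=\sum_{i\le n}\ln\frac{dP_{1,x_i}}{dP_{0,x_i}}(r_i).
\]
Conditional on $(U,\mathcal T_{n-1},x_n)$ under $H_1$, the increment has mean $\KL{P_{1,x_n}}{P_{0,x_n}}\le I_{\max}$.

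I will first work with the truncated stopping time $N\wedge t$. For it, $\KL{\mathbb P_1^{(t)}}{\mathbb P_0^{(t)}}=\E_{H_1}[\Lambda_{N\wedge t}]$, and the optional-stopping/Wald argument for the bounded time $N\wedge t$ gives
\[
  \E_{H_1}[\Lambda_{N\wedge t}]
  =
  \E_{H_1}\!\left[\sum_{n\le N\wedge t}\KL{P_{1,x_n}}{P_{0,x_n}}\right]
  \le
  I_{\max}\,\E_{H_1}[N\wedge t]
  \le
  I_{\max}\,\E_{H_1}[N].
\]
We may assume $\E_{H_1}[N]<\infty$, since otherwise there is nothing to prove.

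\textbf{Step 2: pass to the limit.} Letting $t\to\infty$ and using lower semicontinuity of the KL divergence together with $N<\infty$ almost surely, I obtain
\[
  \KL{\mathbb P_1}{\mathbb P_0}\le I_{\max}\E_{H_1}[N].
\]
I expect this step to be the main technical obstacle. Justifying the Wald identity for possibly unbounded log-likelihood increments needs care. The clean route is to bound only $\E_{H_1}[\Lambda_{N\wedge t}]$, which requires integrability only of the conditional means, and these are bounded by $I_{\max}$. One then notes that the laws of the transcripts stopped at $N\wedge t$ converge to the stopped law, so the divergence between the truncated laws, which is monotone in $t$ by data processing, converges to the full divergence.

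\textbf{Step 3: data processing.} The verdict $v$ is a measurable function of the stopped transcript. Applying data processing to the indicator of $v=\mathsf{Reliable}$ gives
\[
  \KL{\Ber(a)}{\Ber(b)}\le \KL{\mathbb P_1}{\mathbb P_0},
\]
where $a=\Pr_{H_1}(v=\mathsf{Reliable})\ge1-\varepsilon$ and $b=\Pr_{H_0}(v=\mathsf{Reliable})\le\varepsilon$. These bounds hold because $H_1$ has reliability $p_1$ and $H_0$ has reliability $p_0$, and the machine is $\varepsilon$-error-bounded. Since $\varepsilon<1/2$, we have $b\le\varepsilon<1-\varepsilon\le a$. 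In this regime the binary divergence $\KL{a}{b}$ is increasing in $a$ and decreasing in $b$, so it is at least $\KL{1-\varepsilon}{\varepsilon}$. Combining with Step 2 yields the first claim.

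\textbf{Asymptotics.} The asymptotic statement follows from the explicit formula
\[
  \KL{1-\varepsilon}{\varepsilon}=(1-2\varepsilon)\ln\frac{1-\varepsilon}{\varepsilon},
\]
which equals $(1+o(1))\ln(1/\varepsilon)$ as $\varepsilon\to0$.
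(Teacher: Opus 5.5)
Your proposal is correct and follows essentially the same route as the paper: the internal-randomness and query-selection factors cancel, leaving the stopped log-likelihood ratio; the chain rule on $N\wedge t$ bounds its $H_1$-expectation by $I_{\max}\,\E_{H_1}[N]$; and data processing on the binary verdict, together with monotonicity of the Bernoulli divergence, gives $D(1-\varepsilon\,\|\,\varepsilon)$. Your Step 2 justifies the passage $t\to\infty$ more carefully than the paper, which simply asserts the limit. You do it via monotonicity in $t$ plus lower semicontinuity, using almost-sure halting under both hypotheses.
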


\begin{proof}
If $\E_{H_1}[N]=\infty$, the claim is immediate. We therefore assume that
$\E_{H_1}[N]<\infty$.

Let $U$ collect the certification SOTM's internal randomness, independently
of the oracle. For each turn $i$, let $X_i$ and $R_i$ denote the random
query selected by the SOTM and the resulting random oracle response,
respectively. Consider the augmented stopped transcript
\[
  \widetilde{\mathcal T}_N
  =
  (U,X_1,R_1,\ldots,X_N,R_N).
\]
For any random object $\Xi$, let $\mathcal L_{H_b}(\Xi)$ denote the
probability distribution of $\Xi$ under hypothesis $H_b$. Write
\[
  P
  =
  \mathcal L_{H_1}(\widetilde{\mathcal T}_N),
  \qquad
  Q
  =
  \mathcal L_{H_0}(\widetilde{\mathcal T}_N).
\]

For a realized augmented stopped transcript
\[
  \boldsymbol t=(u,x_1,r_1,\ldots,x_N,r_N),
\]
its probability under $H_b$ factors as
\[
  \Pr_{H_b}(\widetilde{\mathcal T}_N=\boldsymbol t)
  =
  \Pr(U=u)
  \prod_{n=1}^{N}
  \Pr(x_n\mid u,x_1,r_1,\ldots,x_{n-1},r_{n-1})
  P_{b,x_n}(r_n).
\]
The internal-randomness factor and the query-selection factors are the same
under both hypotheses, because the certification SOTM follows the same
algorithm under $H_0$ and $H_1$. Hence
\[
  \frac{P(\boldsymbol t)}{Q(\boldsymbol t)}
  =
  \prod_{n=1}^{N}
  \frac{P_{1,x_n}(r_n)}{P_{0,x_n}(r_n)}.
\]
Taking logarithms gives the stopped log-likelihood ratio
\[
  L_N
  =
  \ln\frac{P(\boldsymbol t)}{Q(\boldsymbol t)}
  =
  \sum_{n=1}^{N}
  \ln\frac{P_{1,x_n}(r_n)}{P_{0,x_n}(r_n)}.
\]
Because $L_N$ is the log-likelihood ratio of the augmented stopped
transcript under $H_1$ relative to $H_0$,
\[
  \E_{H_1}[L_N]
  =
  \KL{P}{Q}.
\]

Conditional on the previous augmented transcript and the chosen query
$X_n$, the response $R_n$ has distribution $P_{1,X_n}$ under $H_1$.
Therefore,
\[
  \E_{H_1}\!\left[
    \ln\frac{P_{1,X_n}(R_n)}{P_{0,X_n}(R_n)}
    \,\middle|\,
    \widetilde{\mathcal T}_{n-1},X_n
  \right]
  =
  \KL{P_{1,X_n}}{P_{0,X_n}}.
\]
Applying the sequential chain rule for KL divergence to the transcript
stopped at $N\wedge t$, and then letting $t\to\infty$, gives
\[
  \E_{H_1}[L_N]
  =
  \E_{H_1}\!\left[
    \sum_{n=1}^{N}
    \KL{P_{1,X_n}}{P_{0,X_n}}
  \right].
\]
The per-query information bound therefore implies
\[
  \E_{H_1}[L_N]
  \le
  I_{\max}\E_{H_1}[N].
\]

We use the data-processing inequality for KL divergence, which states that
applying the same post-processing function to data cannot increase the
distinguishability of two probability distributions. More precisely, if
$P$ and $Q$ are probability distributions and $g$ is a measurable function,
then
\[
  \KL{P}{Q}
  \ge
  \KL{P\circ g^{-1}}{Q\circ g^{-1}};
\]
see, for example, \cite[Section~2.8]{CoverThomas2006}.

Let
\[
  g:\{\text{augmented stopped transcripts}\}
  \to
  \{\mathsf{Reliable},\mathsf{Unreliable}\}
\]
denote the certification SOTM's decision rule, so that
\[
  g(\widetilde{\mathcal T}_N)=v.
\]
For example,
\[
  g^{-1}(\{\mathsf{Reliable}\})
  =
  \{\boldsymbol{t}:g(\boldsymbol{t})=\mathsf{Reliable}\}
\]
is the set of augmented stopped transcripts that produce the verdict
$\mathsf{Reliable}$. Thus,
\[
  P\circ g^{-1}
  =
  \mathcal L_{H_1}(v),
  \qquad
  Q\circ g^{-1}
  =
  \mathcal L_{H_0}(v).
\]
Applying the data-processing inequality with the post-processing function
$g$ gives
\[
  \KL{P}{Q}
  \ge
  \KL{
    \mathcal L_{H_1}(v)
  }{
    \mathcal L_{H_0}(v)
  }.
\]
Because the verdict is binary, its distribution under each hypothesis is a
Bernoulli distribution after identifying $\mathsf{Reliable}$ with $1$ and
$\mathsf{Unreliable}$ with $0$.

Let
\[
  q_1=\Pr_{H_1}(v=\mathsf{Reliable}),
  \qquad
  q_0=\Pr_{H_0}(v=\mathsf{Reliable}).
\]
Thus,
\[
  \KL{
    \mathcal L_{H_1}(v)
  }{
    \mathcal L_{H_0}(v)
  }
  =
  \KL{q_1}{q_0}.
\]
The error guarantee gives
\[
  q_1\ge1-\varepsilon,
  \qquad
  q_0\le\varepsilon.
\]
Since Bernoulli KL divergence is increasing in its first argument and
decreasing in its second argument when the first argument exceeds the
second,
\[
  \KL{q_1}{q_0}
  \ge
  \KL{1-\varepsilon}{\varepsilon}.
\]
Combining the preceding inequalities yields
\[
  I_{\max}\E_{H_1}[N]
  \ge
  \E_{H_1}[L_N]
  \ge
  \KL{1-\varepsilon}{\varepsilon}.
\]
This proves the non-asymptotic lower bound.

Finally,
\[
  \KL{1-\varepsilon}{\varepsilon}
  =
  (1-2\varepsilon)
  \ln\frac{1-\varepsilon}{\varepsilon}
  =
  (1+o(1))\ln\frac{1}{\varepsilon}
\]
as $\varepsilon\to0$. Substituting this expression into the
non-asymptotic lower bound proves the asymptotic statement.
\end{proof}

When
\[
  I_{\max}=\KL{p_1}{p_0},
\]
the lower bound matches the leading-order $\ln(1/\varepsilon)$ dependence
of the sample-size upper bound for the SPRT-based certification SOTM in Theorem~\ref{thm:sprt-cost}.

\begin{corollary}[Bernoulli certification lower bound]
\label{cor:bernoulli-certification-lower}
Suppose that under reliability
hypothesis $H_b$, a query produces a correctness score distributed as
$Z\sim\Ber(p_b)$, $b\in\{0,1\}$.
Then every $\varepsilon$-error-bounded certification SOTM satisfies
\[
  \E_{p_1}[N]
  \ge
  \frac{\KL{1-\varepsilon}{\varepsilon}}{\KL{p_1}{p_0}}.
\]
Consequently, as $\varepsilon\to0$,
\[
  \E_{p_1}[N]
  \ge
  (1+o(1))
  \frac{\ln(1/\varepsilon)}{\KL{p_1}{p_0}}.
\]
If additionally $\Delta=p_1-p_0$ is small and $p_0,p_1$ remain bounded
away from $0$ and $1$, then
\[
  \KL{p_1}{p_0}
  =
  \frac{\Delta^2}{2p_0(1-p_0)}
  +
  O(\Delta^3).
\]
Therefore, as $\varepsilon\to0$ and $\Delta\to0$,
\[
  \E_{p_1}[N]
  \ge
  (1+o(1))
  \frac{
    2p_0(1-p_0)\ln(1/\varepsilon)
  }{\Delta^2}.
\]
\end{corollary}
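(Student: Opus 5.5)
The plan is to obtain the corollary by specializing Theorem~\ref{thm:adaptive-certification-lower} to the Bernoulli score model, and then to expand the Bernoulli KL divergence to second order.

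First I will cast the Bernoulli setting in the form of Definition~\ref{def:adaptive-certification-model}. Under $H_b$, every query yields a score distributed as $\Ber(p_b)$. The machine's verdict and stopping rule act on scores; the hypotheses enter only through the score distribution. So it suffices to take the ``response'' seen by the machine to be the score itself, i.e.\ $P_{b,x}=\Ber(p_b)$ for every $x\in X$. This is the worst case for the lower bound. If the machine also saw the full responses, one would pick the hard pair of oracles in which the response distributions at each query differ only through the score, so that the extra information is nil. By the data-processing inequality, any further post-processing cannot increase the divergence either. The consistency requirement $\E[S]=p_b$ then holds trivially. Moreover, $\KL{P_{1,x}}{P_{0,x}}=\KL{p_1}{p_0}$ for every $x$, so the hypothesis of Theorem~\ref{thm:adaptive-certification-lower} holds with $I_{\max}=\KL{p_1}{p_0}$. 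This is finite because $0<p_0<p_1<1$.

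Next I need the error-boundedness required by the theorem. An $\varepsilon$-error-bounded SOTM in the sense of Definition~\ref{def:error-bounded} satisfies $\Pr_{p_1}(v=\mathsf{Unreliable})\le\varepsilon$ and $\Pr_{p_0}(v=\mathsf{Reliable})\le\varepsilon$, since $p_1\ge p_1$ and $p_0\le p_0$. These are exactly $q_1\ge1-\varepsilon$ and $q_0\le\varepsilon$ as used in the theorem. Invoking the theorem then gives $\E_{p_1}[N]\ge \KL{1-\varepsilon}{\varepsilon}/\KL{p_1}{p_0}$. The asymptotic form follows from the identity $\KL{1-\varepsilon}{\varepsilon}=(1-2\varepsilon)\ln\frac{1-\varepsilon}{\varepsilon}$, exactly as at the end of that proof.

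For the small-gap statement I will Taylor expand $g(p)=\KL{p}{p_0}$ about $p=p_0$. We have $g(p_0)=0$ and $g'(p)=\ln\frac{p(1-p_0)}{p_0(1-p)}$, so $g'(p_0)=0$. Next, $g''(p)=\frac{1}{p(1-p)}$, which gives $g''(p_0)=1/(p_0(1-p_0))$. The third derivative is bounded uniformly when $p_0,p_1$ stay in a compact subset of $(0,1)$. This yields $\KL{p_1}{p_0}=\Delta^2/(2p_0(1-p_0))+O(\Delta^3)$, the same expansion as in Corollary~\ref{cor:small-gap}. Consequently $1/\KL{p_1}{p_0}=(1+O(\Delta))\,2p_0(1-p_0)/\Delta^2$. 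Multiplying this by the $\varepsilon$-asymptotic bound gives the final display, with the $o(1)$ absorbing both the $O(\Delta)$ and $\varepsilon$ corrections.

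The only genuinely delicate step is the first one: justifying that the Bernoulli score model fits Definition~\ref{def:adaptive-certification-model} with per-query divergence exactly $\KL{p_1}{p_0}$, rather than a larger divergence coming from the full response distributions. I would handle this by working with the score-level transcript, where the corollary's hypothesis itself stipulates that the observation under $H_b$ is $\Ber(p_b)$. Everything after that is routine.
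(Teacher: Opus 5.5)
Your proposal is correct and follows the paper's proof. Both specialize Theorem~\ref{thm:adaptive-certification-lower} by taking the observed per-query variable to be the score, so that $P_{b,x}=\Ber(p_b)$ for every $x$ and $I_{\max}=\KL{p_1}{p_0}$. Your additional material is not a different route but fills in what the paper leaves implicit: the check that the error conditions give $q_1\ge 1-\varepsilon$ and $q_0\le\varepsilon$, the Taylor expansion of $\KL{p}{p_0}$, the joint $\varepsilon,\Delta$ limit, and the remark about a hard pair of oracles whose responses differ only through the score.
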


\begin{proof}
It suffices to show that $I_{\max}=\KL{p_1}{p_0}$.
Suppose that, for every query $x\in X$, the observed per-query variable is
the binary correctness score and, under hypothesis $H_b$, it has
distribution
\[
  P_{b,x}
  =
  \Ber(p_b),
  \qquad
  b\in\{0,1\}.
\]
Then
\[
  \KL{P_{1,x}}{P_{0,x}}
  =
  \KL{\Ber(p_1)}{\Ber(p_0)}
  =
  \KL{p_1}{p_0}
  \qquad\text{for every }x\in X.
\]
Thus the common per-query information bound may be taken as
\[
  I_{\max}
  =
  \KL{p_1}{p_0}.
\]
\end{proof}

\begin{corollary}[Tightness for small error probability]
The upper and lower bounds match in their leading dependence on the
error probability. 
\end{corollary}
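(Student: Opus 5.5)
The plan is to place the upper bound of Theorem~\ref{thm:sprt-sample} next to the lower bound of Corollary~\ref{cor:bernoulli-certification-lower}, both evaluated at the boundary reliability $p=p_1$. I will then show that their ratio tends to $1$ as $\varepsilon\to0$, with $p_0,p_1$ held fixed. Everything is done first for the sample size $N$ and then transferred to token cost.

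\textbf{Upper side.} By \eqref{eq:sample} at $p=p_1$,
\[
  \E_{p_1}[N]\le \frac{A}{\KL{p_1}{p_0}}+C_1(p_0,p_1),
  \qquad
  A=\ln\frac{1-\varepsilon/2}{\varepsilon/2}=\ln\frac1\varepsilon+\ln 2+\ln(1-\varepsilon/2).
\]
So $A=(1+o(1))\ln(1/\varepsilon)$. Since $C_1(p_0,p_1)$ does not depend on $\varepsilon$, it is $o(\ln(1/\varepsilon))$. Hence
\[
  \E_{p_1}[N]\le(1+o(1))\frac{\ln(1/\varepsilon)}{\KL{p_1}{p_0}}.
\]

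\textbf{Lower side.} Corollary~\ref{cor:bernoulli-certification-lower} gives
\[
  \E_{p_1}[N]\ge(1+o(1))\frac{\ln(1/\varepsilon)}{\KL{p_1}{p_0}}
\]
for every $\varepsilon$-error-bounded certification SOTM. This in particular covers $\SOTM_{\mathrm{SPRT}}$, which is $\varepsilon$-error-bounded by Theorems~\ref{thm:sprt-halting} and~\ref{thm:sprt-reliability}. Combining the two sides shows that the SPRT's expected sample size at $p_1$ and the infimal sample complexity both equal $(1+o(1))\ln(1/\varepsilon)/\KL{p_1}{p_0}$.

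The symmetric statement at $p_0$, with denominator $\KL{p_0}{p_1}$, follows the same way. It uses \eqref{eq:sample-symmetric} together with the lower bound of Theorem~\ref{thm:adaptive-certification-lower} with the roles of $H_0$ and $H_1$ swapped. The swap is legitimate because the data-processing argument there is symmetric: relabeling the verdicts turns the bound $\KL{1-\varepsilon}{\varepsilon}$ into the same quantity.

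\textbf{Token cost.} On the upper side, \eqref{eq:equality} gives $\E_{p_1}[\TOK]=\bar c_{p_1,\alpha,\beta}\E_{p_1}[N]$. For the lower side to match, every machine's cost must be comparable to $\bar c\,\E[N]$. Lower-bounding a general adaptive machine's cost requires its per-turn cost to be at least the SPRT's $\bar c$, but an adaptive machine may prefer cheaper queries. I expect this to be the main obstacle. The intended resolution is to read ``tightness'' as a statement about the sample complexity, i.e., the leading $\ln(1/\varepsilon)$ factor, which is where the statement says the bounds match.

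For token cost I will therefore claim matching only up to the per-turn cost factor. Both bounds have the form (per-turn cost)$\times\ln(1/\varepsilon)/\KL{p_1}{p_0}$. A general machine's cost is lower-bounded by Wald's identity, $\E[\TOK]\ge c_{\min}\E[N]$, where $c_{\min}$ is the minimum per-turn cost over $X$; this quantity is positive because $\alpha>0$ and $X$ is finite. Consequently the $\varepsilon$-dependence of $\kappa^{\mathrm{cert}}_{\boldsymbol d}$ is exactly $\Theta(\ln(1/\varepsilon))$, with leading constants differing only by the ratio $\bar c/c_{\min}$. That ratio equals $1$ when per-turn costs are constant.
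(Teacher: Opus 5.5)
Your sample-size argument is the paper's own: you compare \eqref{eq:sample} at $p=p_1$ with Corollary~\ref{cor:bernoulli-certification-lower}, absorb $A=\ln(1/\varepsilon)+O(1)$ and the $\varepsilon$-free constant $C_1(p_0,p_1)$ into the $(1+o(1))$ factor, and treat the $p\le p_0$ side symmetrically with $\KL{p_0}{p_1}$. Your token-cost paragraph goes beyond the paper's proof, which concerns only $\E[N]$, and your caveat that an adaptive machine may favor cheaper queries, so that token costs match only up to the factor $\bar c/c_{\min}$, is correct and is glossed over by the paper's subsequent token-complexity corollary; note only that $c_{\min}>0$ also requires every query in $X$ to have at least one token, which does not follow from $\alpha>0$ and finiteness of $X$ alone.
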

\begin{proof}
The SPRT-based certification SOTM in
Theorem~\ref{thm:sprt-cost} satisfies, for $p\ge p_1$,
\[
  \E_p[N]
  \le
  \frac{
    \ln\!\bigl(\tfrac{1-\varepsilon/2}{\varepsilon/2}\bigr)
  }{\KL{p_1}{p_0}}
  +
  C(p_0,p_1).
\]
On the other hand, the Bernoulli lower bound gives
\[
  \E_{p_1}[N]
  \ge
  (1+o(1))
  \frac{\ln(1/\varepsilon)}{\KL{p_1}{p_0}}
  \qquad
  \text{as } \varepsilon\to0 .
\]
Since
\[
  \ln\!\bigl(\tfrac{1-\varepsilon/2}{\varepsilon/2}\bigr)
  =
  \ln(1/\varepsilon)+O(1),
\]
the upper bound for the SPRT-based certification SOTM and the lower bound for any adaptive certification SOTM have the same leading term as
$\varepsilon\to0$. Thus, for fixed $p_0$ and $p_1$, the SPRT-based
certification SOTM is asymptotically optimal in its dependence on the
error probability. The same conclusion holds symmetrically on the
$p\le p_0$ side with $\KL{p_0}{p_1}$ in place of $\KL{p_1}{p_0}$.
\end{proof}

\begin{corollary}[Token complexity of reliability certification]
The token complexity of certifying the reliability of a given oracle $\oracle$ with respect to a given domain $\boldsymbol{d}$ is
\[
  \kappa_{\boldsymbol{d}}^{\mathrm{cert}}
  (p_0,p_1,\varepsilon;\alpha,\beta)
  =
  (1+o(1))\,
  \bar c(\alpha,\beta)\,
  \frac{\ln(1/\varepsilon)}{\KL{p_1}{p_0}}
  \qquad
  \text{as } \varepsilon\to0,
\]
for certifying the target $p \geq p_1$ against the below-threshold $p\leq p_0$,
where $\bar c(\alpha,\beta)$ is the expected per-turn token cost. The symmetric
statement for certifying the below-threshold boundary case against the
target boundary case uses $\KL{p_0}{p_1}$ in place of
$\KL{p_1}{p_0}$.

When $\Delta=p_1-p_0$ is small, 
\[
  \kappa_{\boldsymbol{d}}^{\mathrm{cert}}
  (p_0,p_1,\varepsilon;\alpha,\beta)
  =
  (1+o(1))\,
  \bar c(\alpha,\beta)\,
  \frac{\ln(1/\varepsilon)}{\Delta^2},
\]
up to the constant factor determined by the reliability level. In this
sense, reliability certification becomes only logarithmically more
expensive as the desired error probability decreases, but quadratically
more expensive as the reliability gap narrows. The SPRT-based
certification SOTM is therefore asymptotically token-optimal for
Bernoulli reliability certification in the small-error regime.
\end{corollary}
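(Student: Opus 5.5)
We obtain the corollary by sandwiching $\kappa_{\boldsymbol{d}}^{\mathrm{cert}}$ between the SPRT upper bound of Theorem~\ref{thm:sprt-cost} and a token-cost version of the lower bound in Corollary~\ref{cor:bernoulli-certification-lower}, and then substituting the small-gap expansion of Corollary~\ref{cor:small-gap}. We interpret $\bar c(\alpha,\beta)$ as the per-turn expected cost $\bar c_{p,\alpha,\beta}$, evaluated at the boundary hypothesis $p=p_1$. For the first statement, $\KL{p_1}{p_0}$ is the relevant drift, and the expectation is taken under $H_1$.

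\emph{Upper bound.} The machine $\SOTM_{\mathrm{SPRT}}(p_0,p_1,\varepsilon)$ is admissible in Definition~\ref{def:certification-token-complexity}. It halts almost surely by Theorem~\ref{thm:sprt-halting}, and it is $\varepsilon$-error-bounded by Theorem~\ref{thm:sprt-reliability}. Hence $\kappa^{\mathrm{cert}}$ is at most its expected cost. By \eqref{eq:cost1} at $p=p_1$, this cost is at most
\[
\bar c\bigl(\ln\tfrac{1-\varepsilon/2}{\varepsilon/2}/\KL{p_1}{p_0}+C_1(p_0,p_1)\bigr).
\]
Since $\ln\tfrac{1-\varepsilon/2}{\varepsilon/2}=\ln(1/\varepsilon)+O(1)$ and $C_1$ does not depend on $\varepsilon$, this equals $(1+o(1))\,\bar c\,\ln(1/\varepsilon)/\KL{p_1}{p_0}$.

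\emph{Lower bound.} Take any admissible SOTM $\SOTM$. Corollary~\ref{cor:bernoulli-certification-lower} gives
\[
\E_{p_1}[N]\ge(1+o(1))\ln(1/\varepsilon)/\KL{p_1}{p_0}.
\]
We must turn this into a bound on $\E_{p_1}[\TOK_\SOTM]=\E\bigl[\sum_{n\le N}W_n\bigr]$. By optional stopping, this equals $\E\bigl[\sum_{n\le N}\E[W_n\mid\mathcal T_{n-1}]\bigr]$. For an adaptive machine, the conditional per-turn cost is a function of the chosen query. So the lower bound becomes $\bar c_{\min}\,\E[N]$, where $\bar c_{\min}=\min_{x\in X}\bigl(\alpha|\tau(x)|+\beta\,\E_{r\sim\mathcal D_x}|\tau(r)|\bigr)$. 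This minimum is finite and positive because $X$ is finite.

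This is the main obstacle. Matching the constant $\bar c$ exactly requires that an adaptive machine cannot gain by favoring cheap queries. I would handle this by restricting to the Bernoulli model of Corollary~\ref{cor:bernoulli-certification-lower}. There every query carries the same information $\KL{p_1}{p_0}$, so the relevant per-turn cost is the one the problem prescribes. I would state explicitly that $\bar c(\alpha,\beta)$ denotes the per-turn cost under the certification sampling $x\sim\mathcal D_X$, or more generally that the lower constant is $\bar c_{\min}$, which coincides with $\bar c$ when per-query costs are uniform. With this convention, the two bounds agree to leading order, and the first display follows.

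\emph{Small gap.} For the second display, substitute $\KL{p_1}{p_0}=\Delta^2/(2p_0(1-p_0))+O(\Delta^3)$ from Corollary~\ref{cor:small-gap} into both bounds. This gives $\kappa^{\mathrm{cert}}=(1+o(1))\,\bar c\cdot 2p_0(1-p_0)\ln(1/\varepsilon)/\Delta^2$. Absorbing the factor $2p_0(1-p_0)$ into ``the constant factor determined by the reliability level'' gives the stated form. The qualitative claims, logarithmic growth in $1/\varepsilon$ and quadratic growth in $1/\Delta$, can be read off directly. The symmetric statement follows the same argument using \eqref{eq:cost0} and the $p\le p_0$ half of the lower bound, with $H_0$ and $H_1$ interchanged so that the drift is $\KL{p_0}{p_1}$.
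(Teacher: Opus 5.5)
Your sandwich is the route the paper takes. The paper gives no separate proof: it juxtaposes Theorem~\ref{thm:sprt-cost} with Corollary~\ref{cor:bernoulli-certification-lower} and the tightness corollary, and silently multiplies the sample-size lower bound by $\bar c$. You correctly identify that multiplication as the weak step, and your inequality $\E_{p_1}[\TOK_\SOTM]\ge c_{\min}\,\E_{p_1}[N]$ is valid, with $c_x:=\alpha|\tau(x)|+\beta\,\E_{r\sim\mathcal D_x}|\tau(r)|$ and $c_{\min}:=\min_{x\in X}c_x$.

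However, your way of disposing of the obstacle is backwards. In the Bernoulli model every query carries the same information $\KL{p_1}{p_0}$, and that is exactly what lets a machine profit from cheap queries. Consider the machine that always issues a fixed $x^*\in\arg\min_{x\in X}c_x$ and runs the same SPRT on its scores. It still observes i.i.d.\ $\Ber(p)$ scores, so Theorems~\ref{thm:sprt-halting}--\ref{thm:sprt-sample} apply verbatim. By Wald's identity, its expected cost at $p=p_1$ is at most $c_{\min}\bigl(\ln\frac{1-\varepsilon/2}{\varepsilon/2}/\KL{p_1}{p_0}+C_1(p_0,p_1)\bigr)$. Combined with your lower bound, this gives $\kappa^{\mathrm{cert}}=(1+o(1))\,c_{\min}\ln(1/\varepsilon)/\KL{p_1}{p_0}$ in the very model you restrict to. So under your first convention, $\bar c=\E_{x\sim\mathcal D_X}[c_x]$, the first display is false whenever $c_{\min}<\E_{x\sim\mathcal D_X}[c_x]$. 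Declaring that convention cannot rescue the lower bound, and ``the relevant per-turn cost is the one the problem prescribes'' does not follow.

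The argument goes through only after an explicit change, in one of two ways:
\begin{itemize}
\item Assume $c_x$ is the same for every $x\in X$. This is your fallback: then $c_{\min}=\bar c$ and your two bounds meet.
\item Set $\bar c(\alpha,\beta):=c_{\min}$ and use the cheapest-query SPRT instead of the paper's $\mathcal D_X$-sampling machine in the upper bound. Then your token lower bound is exactly tight.
\end{itemize}
Either way this is an added hypothesis or a change of constant that the statement, and the paper, leave implicit. You should also state that the identity is asserted only for the boundary oracle $p=p_1$ in the homogeneous Bernoulli model, since Corollary~\ref{cor:bernoulli-certification-lower} supplies no lower bound for a heterogeneous oracle.

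The upper-bound step is fine ($\ln\frac{1-\varepsilon/2}{\varepsilon/2}=\ln(1/\varepsilon)+O(1)$, and $C_1$ is independent of $\varepsilon$). The small-gap substitution is also fine: $C_1(p_0,p_1)=O(1/\Delta)$ is of lower order than $\ln(1/\varepsilon)/\Delta^2$, and the factor $2p_0(1-p_0)$ is absorbed into the level-dependent constant.
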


\section{Conclusion}\label{sec:conclusion}

This paper studies reliability certification for stochastic oracles in
the SOTM framework. Given an evaluation domain
$\boldsymbol{d}=(X,S,\mathcal{D}_X)$ and reliability thresholds
$0<p_0<p_1<1$, we showed that certification can be formulated as a
sequential decision problem: determine, with controlled error probability,
whether an oracle meets the target reliability level $p_1$ or falls below
the lower threshold $p_0$, while making no required decision in the
ambiguity gap $(p_0,p_1)$.

Our main upper-bound construction is an SPRT-based certification SOTM.
It repeatedly queries the oracle, computes binary correctness scores, and
stops once the accumulated log-likelihood evidence crosses one of two
thresholds. We proved that this procedure halts almost surely, satisfies
the desired two-sided reliability guarantee on the certifiable regions,
and gives an explicit upper bound on the certification token complexity.
This upper bound is computable from the reliability thresholds, the error
probability, and the expected per-turn token cost. In particular, for
fixed thresholds, the dominant dependence on the error probability is
logarithmic, while for a small threshold gap $\Delta=p_1-p_0$, the
dominant dependence on the gap is quadratic in $\Delta^{-1}$.

We also proved a matching lower bound for adaptive certification
SOTMs. Even when a certification SOTM may choose queries
adaptively, each query can contribute only a bounded amount of
statistical information about which certification hypothesis is
underlying the oracle's responses. Consequently, any procedure that
distinguishes the below-threshold boundary case from the target boundary
case with error at most $\varepsilon$ must have certification token
complexity at least on the order of $\ln(1/\varepsilon)$ divided by the
per-query information, up to the relevant per-turn token scale. In the
Bernoulli certification specialization, this lower bound matches the
leading term of the SPRT upper bound as $\varepsilon\to0$.

Together, the upper and lower bounds characterize the token complexity of
oracle reliability certification in the small-error regime. For
certifying the target boundary case $p=p_1$ against the below-threshold
boundary case $p=p_0$, the leading dependence is governed by
\[
  \kappa_{\boldsymbol{d}}^{\mathrm{cert}}
  (p_0,p_1,\varepsilon;\alpha,\beta)
  \approx
  \bar c(\alpha,\beta)\,
  \frac{\ln(1/\varepsilon)}{\KL{p_1}{p_0}},
\]
where $\bar c(\alpha,\beta)$ denotes the expected per-turn token cost. Thus,
certification becomes only logarithmically more demanding as the desired
error probability decreases, but substantially more demanding as the
reliability gap narrows. When $\Delta=p_1-p_0$ is small, the leading
dependence is proportional to
\[
  \bar c(\alpha,\beta)\,\frac{\ln(1/\varepsilon)}{\Delta^2}.
\]

These results give reliability certification a token-complexity
interpretation. In particular, certification token complexity is the
minimum expected token cost required to evaluate an existing oracle on a
specified domain with controlled error probability. This cost is governed
by the statistical information obtained per query and by the desired
separation between acceptable and unacceptable reliability levels. The
resulting framework provides a principled way to estimate the token
complexity of certifying an LLM or other stochastic oracle before
deployment in a specialized application domain.

Several directions remain open. One natural extension is to move beyond
binary scoring rules and study graded or task-dependent scoring
functions. Although a scoring function
with values in $[0,1]$ can be reduced to the present setting by randomized
Bernoulliization, this reduction may discard information contained in the
score's full distribution. It is more interesting to develop certification SOTMs that directly use continuous scores while retaining
finite-sample error guarantees and sharp token-complexity bounds. In
particular, it would be valuable to characterize the optimal leading-order
token complexity for testing whether the mean score lies above or below the
reliability thresholds when the score distribution is not assumed to be
Bernoulli.

Another is to incorporate heterogeneous query costs, where
different regions of the evaluation domain have different token budgets
and different information values. A third direction is to analyze
certification under distribution shift, where the deployment distribution
may differ from the evaluation distribution. Finally, the adaptive lower
bound suggests that optimal certification should query not only for
representativeness, but also for information efficiency; understanding
that tradeoff is an important step toward practical, token-efficient
oracle evaluation.

\bibliographystyle{plain}
\bibliography{reference}

@article{Wang2026,
author       = {Wang, Jie},
title        = {Token Complexity Theory for {AI}-Augmented Computing},
journal      = {arXiv preprint arXiv:2606.12647},
year         = {2026},
url          = {https://arxiv.org/abs/2606.12647}
}

@inproceedings{Kudo2018,
author    = {Kudo, Taku and Richardson, John},
title     = {{SentencePiece}: A Simple and Language Independent Subword Tokenizer and Detokenizer for Neural Text Processing},
booktitle = {Proceedings of the 2018 Conference on Empirical Methods in Natural Language Processing: System Demonstrations},
pages     = {66--71},
publisher = {Association for Computational Linguistics},
year      = {2018}
}

@inproceedings{Schuster2012,
author    = {Schuster, Mike and Nakajima, Kaisuke},
title     = {Japanese and {Korean} Voice Search},
booktitle = {IEEE International Conference on Acoustics, Speech and Signal Processing},
pages     = {5149--5152},
publisher = {IEEE},
year      = {2012}
}

@inproceedings{Sennrich2016,
author    = {Sennrich, Rico and Haddow, Barry and Birch, Alexandra},
title     = {Neural Machine Translation of Rare Words with Subword Units},
booktitle = {Proceedings of the 54th Annual Meeting of the Association for Computational Linguistics},
pages     = {1715--1725},
publisher = {Association for Computational Linguistics},
year      = {2016}
}

@book{LehmannRomano2005,
author    = {Lehmann, E. L. and Romano, Joseph P.},
title     = {Testing Statistical Hypotheses},
publisher = {Springer},
edition   = {3},
year      = {2005}
}

@article{WaldWolfowitz1948,
author  = {Wald, Abraham and Wolfowitz, Jacob},
title   = {Optimum Character of the Sequential Probability Ratio Test},
journal = {Annals of Mathematical Statistics},
volume  = {19},
number  = {3},
pages   = {326--339},
year    = {1948}
}

@book{Wald1947,
author    = {Wald, Abraham},
title     = {Sequential Analysis},
publisher = {Wiley},
year      = {1947}
}

@book{CoverThomas2006,
author    = {Cover, Thomas M. and Thomas, Joy A.},
title     = {Elements of Information Theory},
publisher = {Wiley},
edition   = {2},
year      = {2006}
}

@book{Tsybakov2009,
author    = {Tsybakov, Alexandre B.},
title     = {Introduction to Nonparametric Estimation},
publisher = {Springer},
year      = {2009}
}

@book{Feller1971,
author    = {Feller, William},
title     = {An Introduction to Probability Theory and Its Applications},
volume    = {2},
edition   = {2},
publisher = {John Wiley \& Sons},
address   = {New York},
year      = {1971}
}

\end{document}